\renewcommand{\algorithmicindent}{.4cm}
\newtheorem{definition}{Definition}
\newtheorem{theorem}{Theorem}
\newtheorem{lemma}{Lemma}
\newtheorem{proposition}{Proposition}
\newtheorem{corollary}{Corollary}
\newcommand{\mat}[1]{\boldsymbol{#1}}
\newcommand{\pp}[1]{{\left( #1 \right)}}
\newcommand{\br}[1]{{\left\{ #1 \right\}}}
\newcommand{\norm}[1]{{ \Vert #1 \Vert }}
\newcommand{\abs}[1]{{ | #1 | }}
\newcommand{\sabs}[1]{{ | #1 |^2 }}
\newcommand{\snorm}[1]{{ \Vert #1 \Vert^2 }}
\def\mrt{{\text{\tiny{MRT}}}}
\def\eig{{\text{\tiny{WF}}}}
\def\zf{{\text{\tiny{ZF}}}}
\def\H{{H}} 
\def\ld{{\log_2}} 
\def\bI{{\mat{I}}} 
\def\bh{{\mat{h}}} 
\def\bw{{\mat{w}}} 
\def\bg1{{g_{11}^{\perp}}}
\def\bg2{{g_{22}^{\perp}}}
\def\coN{{\mathcal{N}}} 
\def\coS{{\mathcal{S}}} 
\def\coT{{\mathcal{T}}} 
\def\coCS{{\mathcal{C}}} 
\def\setA{{\mathcal{A}}} 
\def\setX{{\mathcal{X}}} 
\def\setP{{\mathcal{P}}} 
\def\setR{{\mathcal{R}}} 
\newcommand{\bwzf}[2]{{\mat{w}}^\zf_{{#1}\rightarrow {#2}}}
\newcommand{\bwwf}[2]{{\mat{w}}^\eig_{{#1}\rightarrow {#2}}}
\begin{document}
\title{Coalitional Games in MISO Interference Channels: Epsilon-Core and Coalition Structure Stable Set}%
\author{Rami~Mochaourab,~\IEEEmembership{Member,~IEEE,}
        and~Eduard~Jorswieck,~\IEEEmembership{Senior Member,~IEEE,}
\thanks{\copyright~2014 IEEE. Personal use of this material is permitted. Permission from IEEE must be obtained for all other uses, in any current or future media, including reprinting/republishing this material for advertising or promotional purposes, creating new collective works, for resale or redistribution to servers or lists, or reuse of any copyrighted component of this work in other works.}
\thanks{Part of this work was presented at IEEE CAMSAP, San Juan, Puerto Rico, December 13-16, 2011 \cite{Mochaourab2011b}. Rami Mochaourab is with ACCESS Linnaeus Centre, Signal Processing Department, School of Electrical Engineering, KTH Royal Institute of Technology, 100 44 Stockholm, Sweden. Phone: +4687908434. Fax: +4687907260. E-mail: rami.mochaourab@ieee.org. Eduard Jorswieck is with Communications Theory, Communications Lab, TU Dresden, 01062 Dresden, Germany. E-mail: eduard.jorswieck@tu-dresden.de.}}%
\maketitle
\begin{abstract}
The multiple-input single-output interference channel is considered. Each transmitter is assumed to know the channels between itself and all receivers perfectly and the receivers are assumed to treat interference as additive noise. In this setting, noncooperative transmission does not take into account the interference generated at other receivers which generally leads to inefficient performance of the links. To improve this situation, we study cooperation between the links using coalitional games. The players (links) in a coalition either perform zero forcing transmission or Wiener filter precoding to each other. The $\epsilon$-core is a solution concept for coalitional games which takes into account the overhead required in coalition deviation. We provide necessary and sufficient conditions for the strong and weak $\epsilon$-core of our coalitional game not to be empty with zero forcing transmission. Since, the $\epsilon$-core only considers the possibility of joint cooperation of all links, we study coalitional games in partition form in which several distinct coalitions can form. We propose a polynomial time distributed coalition formation algorithm based on coalition merging and prove that its solution lies in the coalition structure stable set of our coalition formation game. Simulation results reveal the cooperation gains for different coalition formation complexities and deviation overhead models.
\end{abstract}

\begin{keywords}
interference channel; beamforming; coalitional games; epsilon-core; coalition structure stable set
\end{keywords}
\IEEEpeerreviewmaketitle
\section{Introduction}
In multiuser interference networks, interference can be the main cause for performance degradation of the systems \cite{Gesbert2010}. With the use of multiple antennas at the transmitters, interference can be managed through cooperative beamforming techniques. For this purpose, backhaul connections are necessary in order to exchange information for cooperation between the transmitters \cite{Gesbert2010}. In this work, we consider cooperation between the transmitters at the beamforming level only, i.e., information concerning the joint choice of beamforming vectors at the transmitters is exchanged between the transmitters but not the signals intended to the users.

The system we consider consists of a set of transmitter-receiver pairs which operate in the same spectral band. The transmitters use multiple antennas while the receivers have single antennas. This setting corresponds to the multiple-input single-output (MISO) interference channel (IFC) \cite{Vishwanath2004}. Next, we describe existing work on beamforming mechanisms in this setting assuming perfect channel state information (CSI) at the transmitters and single-user decoding at the receivers.%
\subsection{Beamforming in the MISO interference channel}%
Optimal beamforming in the MISO IFC corresponds to rate tuples at which it is not possible to strictly improve the performance of the users jointly. Such points are called Pareto optimal. Finding special Pareto optimal operating points in the MISO IFC such as the maximum weighted sum-rate, geometric mean, and proportional-fair rate points are NP-hard problems \cite{Liu2011}. However, finding the max-min Pareto optimal operating point is polynomial time solvable \cite{Liu2011}. In \cite{Qiu2011}, two distributed algorithms are proposed to compute the max-min operating point. After exchanging optimization parameters, the computational load of the beamforming vectors is carried out sequentially at the transmitters. In \cite{Bjornson2011b}, a monotonic optimization framework is proposed to find points such as the maximum sum-rate operating point in general MISO settings with imperfect channel state information at the transmitters. The interested reader is referred to \cite{Bjornson2013} for characterizations of optimal beamforming in MISO settings.

Since optimal beamforming requires high information exchange between the transmitters (or central controller), low complexity and distributed transmission schemes are desirable for practical implementation. When utilizing the reciprocity of the uplink channel in time division duplex (TDD) systems, each transmitter is able obtain perfect local CSI of the channels between itself and all receivers \cite{Bengtsson2001}. Cooperative beamforming schemes based on local CSI do not require CSI exchange through the backhaul connections and are hence favorable \cite{Gesbert2010}.

One cooperative beamforming scheme which requires local CSI is zero forcing (ZF) transmission. This transmission scheme produces no interference at unintended receivers. Heuristic ZF transmission schemes in multicell settings have been proposed in \cite{Brunner2010,Dotzler2011}, where the objective is to efficiently select a subset of receivers at which interference is to be nulled. In \cite{Brunner2010}, the transmitters perform ZF to receivers which are mostly affected by interference. In \cite{Dotzler2011}, a successive greedy user selection approach is applied with the objective of maximizing the system sum-rate. Another beamforming scheme which requires local CSI is Wiener filter (WF) precoding\footnote{Also called minimum mean square error (MMSE) transmit beamforming.} \cite{Joham2005}. Joint WF precoding in MISO IFC is proposed in \cite{Zakhour2009} as a non-iterative cooperation scheme. For the two-user case, the obtained operating point is proven to be Pareto optimal. Furthermore, in \cite{Vazquez2013}, the authors study the reciprocity of the uplink and downlink channels in the MISO IFC to formulate a distributed beamforming scheme in the MISO IFC. In the proposed beamforming schemes in \cite{Zakhour2009,Vazquez2013}, all transmitters cooperate with each other.

Joint cooperation between the links does not necessarily lead to an improvement in the rates of the users compared to a noncooperative and noncomplex beamforming method. Accordingly, a user may not have the incentive to cooperate with all other users. It is then of interest to devise stable cooperative mechanisms that also determine which links would cooperate voluntarily with each other.

Game theory provides appropriate models for designing distributed resource allocation mechanisms. The conflict in multiple antenna interference channels is studied using game theory in \cite{Larsson2009}. The noncooperative operating point (Nash equilibrium) in the MISO IFC corresponds to joint maximum ratio transmission (MRT). This strategy is found to be generally not efficient \cite{Larsson2008a}. In order to improve the performance of the Nash equilibrium, interference pricing is applied in \cite{Schmidt2009}.

Cooperative games in the MISO IFC have been applied in \cite{Nokleby2009, Mochaourab2012, Mochaourab2010}. The Kalai-Smorodinsky solution from axiomatic bargaining theory is studied in the MISO IFC in \cite{Nokleby2009} and an algorithm is provided to reach the solution. In the two-user case, all cooperative solutions, called exchange equilibria, are characterized in \cite{Mochaourab2012} and a distributed mechanism is proposed to reach the Walrasian equilibrium in the setting. Using strategic bargaining, an operating point in the set of exchange equilibria is reached requiring two-bit signaling between the transmitters in \cite{Mochaourab2010}. The approaches in \cite{Mochaourab2012,Mochaourab2010} are however limited to the two-user case.%
\subsection{Applications of Coalitional Games in Partition Form}%
Coalitional games provide structured methods to determine possible cooperation between rational players. A tutorial on the application of coalitional games in communication networks can be found in \cite{Saad2009a}. In these games, a coalition is a set of players which would cooperate to achieve a joint performance improvement. In interference networks, the performance of a coalition of players depends on the coalitions formed outside the coalition. Appropriate in this context are coalitional games in partition form \cite{Thrall1963} which take into account what the players achieve given a \emph{coalition structure}, a partition of the set of users into disjoint cooperative sets.

There are different stability concepts for coalitional games in partition form. In \cite{Apt2006}, \emph{$\mathbb{D}_{hp}$-stability} is proposed which is based on deviation rules of coalition merging and splitting. The stability concept is used for games with transferable utility in \cite{Saad2009} and also applied in \cite{Mochaourab2011b} in the MISO IFC for games with nontransferable utility and partition form. The \emph{recursive core} \cite{Koczy2007} solution concept for coalitional games in partition form has been applied in \cite{Pantisano2013,Guruacharya2013}. In \cite{Pantisano2013}, the set of cooperating base stations (a coalition) performs interference alignment. An algorithm is proposed in which the coalitions can arbitrarily merge and split and proven to converge to an element in the recursive core. In \cite{Guruacharya2013}, MISO channels are considered and the set of cooperating transmitters apply network MIMO techniques. Coalition formation in \cite{Guruacharya2013} is restricted to merging of pairs of coalitions only supporting low complexity implementation.

While in $\mathbb{D}_{hp}$-stability and the recursive core solution concepts a set of players can deviate and form new coalitions, individual based stability \cite{Dreze1980} restricts only a single player to leave a coalition and join another. A deviation in which a user leaves a coalition and joins another if this improves his payoff leads to \emph{Nash stability} \cite{Bogomolnaia2002} and has been used in \cite{Saad2012} in the context of channel sensing and access in cognitive radio. In \cite{Zhou2013}, \emph{individual stability} which is a weaker stability concept than Nash stability is used for coalitional games in the multiple-input multiple-output (MIMO) interference channel. A coalition of links cooperate by performing ZF to each other. Individual stability requires additional to Nash stability the constraint that the payoffs of the members of the coalition in which the deviator wants to join do not decrease.

\subsection{Contributions}
We consider coalitional games without transferable utilities \cite{Osborne1994} among the links. While noncooperative transmission corresponds to MRT, we restrict cooperation between a set of links to either ZF transmission or WF precoding. In \cite{Mochaourab2011b}, the necessary and sufficient conditions for a nonempty core of the coalitional game with ZF transmission are characterized. In this work, we provide the necessary and sufficient conditions for nonempty \emph{strong} and \emph{weak $\epsilon$-core} \cite{Shapley1966} of the coalitional game with ZF beamforming. The $\epsilon$-core generalizes the core solution concept and includes an overhead for the deviation of a coalition. In contrast to the result in \cite{Mochaourab2011b} which specifies an SNR threshold above which the core is not empty, the $\epsilon$-core is not empty above an SNR threshold and also below a specific SNR threshold. These thresholds depend on the deviation overhead measure and the user channels.

While the strong and weak $\epsilon$-core solution concepts consider the stability of the grand coalition only, we study coalitional games in partition form in which several distinct coalitions can form. In \cite{Mochaourab2011b}, coalition formation based on merging and splitting of coalitions has been applied. In this work, we propose a distributed coalition formation algorithm based on coalition merging only. We propose a coalition deviation rule, $q$-Deviation, to incorporate a parameter $q$ which regulates the complexity for finding deviating coalitions. The outcome of the coalition formation algorithm is proven to be inside the \emph{coalition structure stable set} \cite{Diamantoudi2007} of our coalition formation game. Accordingly, the stability of the obtained partition of the links is ensured. We provide an implementation of the coalition formation algorithm and show that only two-bit signaling between the transmitters is needed. Moreover, we prove that the proposed coalition formation algorithm terminates in polynomial time. Simulation results reveal the tradeoff between coalition formation complexity and the obtained performance of the links. In addition, we compare our algorithm regarding complexity and performance to the algorithms in \cite{Guruacharya2013} and \cite{Zhou2013}. 

To the best of our knowledge, the application of the $\epsilon$-core of coalitional games and the coalition structure stable set solution concepts are new for resource allocation in wireless networks. Note, that the coalition structure stable set solution concept for coalitional games in partition form is different than the recursive core \cite{Koczy2007} solution concept used in \cite{Pantisano2013,Guruacharya2013}.

\subsubsection*{Outline}
In Section \ref{sec:preliminaries}, we provide the system and channel model and also describe the noncooperative state of the links. In Section \ref{sec:coalitionalGame}, the game in coalitional form is formulated and its solution is analyzed. In Section \ref{sec:coalitionformation}, we formulate the game in partition form and specify our coalition formation mechanism. We also study the complexity of the proposed deviation model and provide an implementation of the coalition formation algorithm in our considered system. In Section \ref{sec:simResults}, we provide simulation results before we draw the conclusions in Section \ref{sec:conc}.
\subsubsection*{Notations}
Column vectors and matrices are given in lowercase and uppercase boldface letters, respectively. $\norm{\mat{a}}$ is the Euclidean norm of $\mat{a} \in \mathbb{C}^{N}$. $\abs{b}$ and $\abs{\mathcal{S}}$ denote the absolute value of $b \in \mathbb{C},$ and the cardinality of a set $\mathcal{S}$, respectively. $(\cdot)^\H$ denotes the Hermitian transpose. The orthogonal projector onto the null space of $\mat{Z}$ is $\mat{\Pi}_{Z}^{\perp} := \bI - \mat{Z}(\mat{Z}^\H\mat{Z})^{-1}\mat{Z}^\H$, where $\bI$ is an identity matrix. $(x_i)_{i \in \mathcal{S}}$ denotes a profile with the elements corresponding to the set $\mathcal{S}$. The notation $f(x) \in \mathcal{O}(g(x))$ means that the asymptotic growth of $f(x)$ in $x$ is upper bounded by $g(x)$.%

\section{Preliminaries}\label{sec:preliminaries}
\subsection{System and Channel Model}
Consider a $K$-user MISO IFC and define the set of links as $\coN := \{1,...,K\}$. Each transmitter $i$ is equipped with $N_i \geq 2$ antennas, and each receiver with a single antenna. The quasi-static block flat-fading channel vector from transmitter $i$ to receiver $j$ is denoted by $\mat{h}_{ij} \in \mathbb{C}^{N_i \times 1}$. Each transmitter is assumed to have perfect local CSI. The local CSI is gained through uplink training pilot signals \cite{Bengtsson2001}. Here, we assume time division duplex (TDD) systems with sufficiently low delay between the downlink and uplink time slots such that, using channel reciprocity, the downlink channels are estimated to be the same as the uplink channels.

The beamforming vector used by a transmitter $i$ is denoted by $\bw_{i} \in \setA_i$, where the set $\setA_i$ is the \emph{strategy space} of transmitter $i$ defined as
\begin{equation}
\setA_i := \{ \bw \in \mathbb{C}^{N_i \times 1}: \snorm{\bw} \leq 1\},
\end{equation}
where we assumed a total power constraint of one (w.l.o.g.). The basic model for the matched-filtered, symbol-sampled complex baseband data received at receiver $i$ is
\begin{eqnarray}\label{eq:systemmodel}
  y_i = \bh_{ii}^\H \bw_i s_i + \sum\nolimits_{j\neq i} \bh_{ji}^\H \bw_j s_j + n_i,
\end{eqnarray}
\noindent where $s_j \sim \mathcal{CN}(0,1)$ is the symbol transmitted by transmitter $j$ and $n_i\sim \mathcal{CN}(0,\sigma^2)$ is additive white Gaussian noise. We assume that all signal and noise variables are statistically independent. Throughout, we define the SNR as $1/\sigma^2$.

A \emph{strategy profile} is a joint choice of strategies of all transmitters defined as
\begin{equation}\label{eq:strategySpace_all}
(\bw_1,...,\bw_K) \in \setX := \setA_1 \times \cdots \times \setA_K.
\end{equation}

\noindent Given a strategy profile, the achievable rate of link $i$ is
\begin{equation}\label{eq:Rate}
u_i(\bw_1,...,\bw_K) = \ld\pp{1 + \frac{ \sabs{\bh_{ii}^\H \bw_i}}{\sum\nolimits_{j \neq i} \sabs{\bh_{ji}^\H \bw_j} + \sigma^2}},
\end{equation}
\noindent where we assume single-user decoding receivers.

\subsection{Noncooperative Operation}\label{sec:noncoop}
In game theory, games in strategic form describe outcomes of a conflict situation between noncooperative entities. A strategic game is defined by the tuple $\langle \mathcal{N}, \setX, (u_i)_{i \in \coN} \rangle,$ where $\mathcal{N}$ is the set of players (links), $\setX$ is the strategy space of the players given in \eqref{eq:strategySpace_all}, and $u_{k}$ is the utility function of player $k$ given in \eqref{eq:Rate}. In \cite{Larsson2008}, it is shown that maximum ratio transmission (MRT), written for a transmitter $i$ as
\begin{equation}\label{eq:MRT_transmission}
    \bw_i^\mrt = {\bh_{ii}}/{\norm{\bh_{ii}}},
\end{equation}
\noindent is a unique \emph{dominant strategy}. A dominant strategy equilibrium \cite[Definition 181.1]{Osborne1994} of a strategic game is a strategy profile $(\bw^*_1,...,\bw^*_K)$ such that for every player $i \in \mathcal{N}$
\begin{equation}\label{eq:DomStrDef}
    u_i(\bw^*_i,\bw_{-i}) \geq u_i(\bw_i,\bw_{-i}), \quad \forall (\bw_i,\bw_{-i}) \in \setX,
\end{equation}
\noindent where $\bw_{-i}:=(\bw_j)_{j \in \mathcal{N}\setminus\{i\}}$ is the collection of beamforming vectors of all users other than user $i$. Hence, each transmitter chooses MRT irrespective of the strategy choice of the other transmitters. Consequently, the strategy profile $(\bw^\mrt_1,...,\bw^\mrt_K)$ is the unique Nash equilibrium of the strategic game between the links. In \cite{Larsson2008a}, it is shown that joint MRT is near to the Pareto boundary of the achievable rate region in the low SNR regime. In the high SNR regime, joint MRT has poor performance \cite{Larsson2008} while zero forcing (ZF) transmission is near to the Pareto boundary \cite{Larsson2008a}. However, ZF beamforming cannot be implemented if the links are not cooperative. Therefore, we will study cooperative games between the links.%
\section{Coalitional Game}\label{sec:coalitionalGame}
\subsection{Game in Coalitional Form}
In game theory, cooperative games are described by games in coalitional form. A game in coalitional form \cite[Definition 268.2]{Osborne1994} is defined by the tuple
\begin{equation}\label{eq:coalitional_game}
\langle \coN, \setX, V, (u_i)_{i\in \coN} \rangle,
\end{equation}
\noindent where $\coN$ is the set of players, $\setX $ is the set of possible joint actions of the players in \eqref{eq:strategySpace_all}, $V$ assigns to every coalition $\coS$ (a nonempty subset of $\coN$) a set $V(\coS) \subseteq \setX$, and $u_k$ is the utility of player $k$ given in \eqref{eq:Rate}. A coalition $\coS$ is a set of players that are willing to cooperate, and $V(\coS)$ defines their joint feasible strategies. The game considered in \eqref{eq:coalitional_game} is in characteristic form and the mapping $V(\coS)$, called the \emph{characteristic function}, assumes a specific behaviour for the players outside $\coS$.

There exist several models that describe the behavior of the players outside $\coS$ \cite{Marini2007}. For our model, we adopt the $\gamma$-model from \cite{Hart1983} and specify that all players outside a coalition $\coS$ do not cooperate, i.e., build \emph{single-player coalitions}. Later in Section \ref{sec:coalitionformation}, we consider a coalitional game in partition form in which the formation of several coalitions is feasible. Throughout, we assume that the payoff of a player in a coalition cannot be transferred to other players in the same coalition. Thus, we consider games with \emph{nontransferable utilities} which is appropriate for our model in which the achievable rate of one link cannot be utilized at other links.

A solution of the coalitional game is the core which is a set of joint strategies in $\setX$ with which all players want to cooperate in a grand coalition and any deviating coalition cannot guarantee higher utilities to all its members. With this respect, the core strategies are stable. We adopt the following variant of the core \cite{Shapley1966}.\footnote{The definition of $\epsilon$-core in \cite{Shapley1966} is for games with transferable utility. Here we formulate the solution concept for games with nontransferable utility such that the overhead $\epsilon$ is different for each player and not transferable to other players in its coalition.}

\begin{definition}\label{def:sCore}
The \emph{weak $\epsilon$-core} of a coalitional game is the set of all strategy profiles $(\mat{x}_i)_{i\in \coN} \in V(\coN)$ for which there is no coalition $\coS$ and $(\mat{y}_i)_{i\in \coN} \in V(\coS)$ such that $u_i(\mat{y}_1,...,\mat{y}_K) - \epsilon_i > u_i(\mat{x}_1,...,\mat{x}_K)$ with $\epsilon_i \geq 0$ for all $i \in \coS$.
\end{definition}

The weak $\epsilon$-core is not empty if there exists no coalition $\coS \subset \coN$ whose members achieve higher payoff than in the grand coalition $\coN$ taking the additional overhead $\epsilon_i$ in deviation of each player $i\in \coS$ into account. Alternatively, $\epsilon_i$ can be considered as a reward which is given to player $i$ in order to motivate him to stay in the grand coalition. However, since in our model, no external entity is assumed which can give such a reward to the users, the interpretation of $\epsilon_i$ as an overhead is more appropriate. Incorporating the overhead in the solution concept is appropriate in communication networks since coalition deviation requires an additional complexity for searching for possible coalitions to cooperate with. This overhead is discussed later in Section~\ref{sec:complexity} in detail.

In Definition \ref{def:sCore}, the overhead $\epsilon_i$ for deviation of a player $i$ in a coalition $\coS$ is fixed. A stronger notion for the weak $\epsilon$-core accounts for an overhead which depends on the size of the coalition which deviates.
\begin{definition}\label{def:wCore}
The \emph{strong $\epsilon$-core} of a coalitional game is the set of all strategy profiles $(\mat{x}_i)_{i\in \coN} \in V(\coN)$ for which there is no coalition $\coS$ and $(\mat{y}_i)_{i\in \coN} \in V(\coS)$ such that $u_i(\mat{y}_1,...,\mat{y}_K) - \epsilon_i / \abs{\coS} > u_i(\mat{x}_1,...,\mat{x}_K)$ with $\epsilon_i \geq 0$ for all $i \in \coS$.
\end{definition}

The interpretation of the overhead in the definition of the strong $\epsilon$-core originates from the original definition in \cite{Shapley1966} for games with transferable utility where the overhead $\epsilon$ required for the deviating coalition $\coS$ is shared by its members. Hence, the overhead decreases for each member of $\coS$ as the size of $\coS$ increases. This is in contrast to the weak $\epsilon$-core where the overhead is constant for each player. The strong and weak $\epsilon$-core definitions can be regarded as generalizations of the traditional solution concept of the core for which the overhead is set as $\epsilon_i = 0$ for all $i \in \coN$. Interestingly, taking into account the deviation overhead, the solution set of the coalitional game is enlarged, i.e., the core is a subset of the strong $\epsilon$-core \cite{Shapley1966}. Also, the strong $\epsilon$-core is a subset of the weak $\epsilon$-core. For coalitional games in which the core is empty, including the overhead in the deviation could lead to stability of the system.

Next, we will specify the characteristic function $V(\coS)$. While we adopt the $\gamma$-model to assume that the players outside a coalition are noncooperative, in order to define $V(\coS)$ we need to specify the cooperation strategies in a coalition $\coS$. We consider two simple non-iterative transmission schemes which can be applied in a distributed manner. These are ZF and WF beamforming defined in the next subsections.

\subsection{Coalitional Game with Zero Forcing Beamforming}
The transmitters choose MRT if they are not cooperative according to Section \ref{sec:noncoop}. If a transmitter cooperates with a set of links, then it performs ZF in the direction of the corresponding receivers. Hence, we define the mapping
\begin{multline}\label{eq:ZFV}
V^\zf(\coS) = \{(\bw_i)_{i \in \coN} \in \setX: \bw_i = \bwzf{i}{\coS} \text{ for } i \in \coS, \\ \bw_j = \bw_j^\mrt \text{ for } j \in \coN\backslash \coS\},
\end{multline}
\noindent where $\bwzf{i}{\coS}$ is transmitter $i$'s ZF beamforming vector to the links in $\coS$ written as
\begin{equation}\label{eq:beamZFCoal}
\bwzf{i}{\coS} = \frac{\Pi_{\mat{Z}_{i\rightarrow \coS}}^\perp \bh_{ii}}{\norm{\Pi_{\mat{Z}_{i\rightarrow \coS}}^\perp \bh_{ii}}}, \quad \mat{Z}_{i\rightarrow \coS} = (\bh_{ij})_{j\in \coS\backslash \{i\}}.
\end{equation}
\noindent Observe that if the number of antennas $N_i < \abs{\coS}$, then ZF in \eqref{eq:beamZFCoal} is the zero vector, i.e. transmitter $i$ switches its transmission off. Similar to the definition of the strategy profile $V^\zf(\coS)$ in \eqref{eq:ZFV}, it is possible to consider different cooperative transmit beamforming than ZF in a coalition.

According to Definition \ref{def:sCore}, the weak $\epsilon$-core is not empty if and only if
\begin{equation}\label{eq:GrandCoalZF1}
u_i(V^\zf(\coS)) - \epsilon_i \leq u_i(V^\zf(\coN)), \quad \forall i \in \coS \text{~and~} \forall \coS \subset \coN.
\end{equation}
\noindent The next result provides the conditions under which the weak $\epsilon$-core of our game is not empty.
\begin{proposition}\label{thm:grandCoal}
For $\epsilon_i > 0$ for all $i \in \coN$, the weak $\epsilon$-core is not empty if and only if the noise power satisfies $\sigma^2 \leq \bar{\sigma}^2$ and $\sigma^2 \geq \underline{\sigma}^2$ where
\begin{equation}\label{eq:cond_grandCoal1}
\bar{\sigma}^2 := \min_{\coS \subset \coN} \min_{i\in \coS} \left\{ \bar{\sigma}_{i,\coS}^2 \right\}, \quad \underline{\sigma}^2 := \max_{\coS \subset \coN} \max_{i\in \coS} \left\{ \underline{\sigma}_{i,\coS}^2 \right\},
\end{equation}
\noindent with
\begin{equation}\label{eq:sigma_ub}
\bar{\sigma}_{i,\coS}^2 := \left\{
  \begin{array}{ll}
    \infty, & \Delta_{i,\coS} < 0 \text{ or } \Psi_{i,\coS} \geq 0; \\
    \frac{-\Psi_{i,\coS} - \sqrt{\Delta_{i,\coS}} }{2(2^{\epsilon_i} - 1)}, & \Delta_{i,\coS} \geq 0 \text{ and } \Psi_{i,\coS} < 0;
  \end{array}
\right.
\end{equation}
and
\begin{equation}\label{eq:sigma_lb}
\underline{\sigma}_{i,\coS}^2 := \left\{
  \begin{array}{ll}
    0, & \Delta_{i,\coS} < 0 \text{ or } \Psi_{i,\coS} \geq 0; \\
    \frac{-\Psi_{i,\coS} + \sqrt{\Delta_{i,\coS}} }{2(2^{\epsilon_i} - 1)}, & \Delta_{i,\coS} \geq 0 \text{ and } \Psi_{i,\coS} < 0;
  \end{array}
\right.
\end{equation}
and the used parameters are defined as
\begin{subequations}
\begin{align}\label{eq:Psi}
\Delta_{i,\coS} & := \Psi_{i,\coS}^2 - 4 (2^{\epsilon_i} - 1)2^{\epsilon_i} C_{i} B_{i,\coS},\\ \label{eq:Delta}
\Psi_{i,\coS} & := \pp{2^{\epsilon_i} (B_{i,\coS} + C_{i}) - (B_{i,\coS} + A_{i,\coS})},\\ \label{eq:A}
A_{i,\coS} & := {\sabs{\bh_{ii}^\H \bw_{i\rightarrow\coS}^\zf}},\\ \label{eq:B}
B_{i,\coS} & := \sum\nolimits_{j \in \coN \backslash \coS} \sabs{\bh_{ji}^\H \bw_j^\mrt},~ C_{i} := \sabs{\bh_{ii}^\H \bw_{i\rightarrow\coN}^\zf}.
\end{align}
\end{subequations}
\end{proposition}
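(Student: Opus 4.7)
The plan is to reduce the non-emptiness of the weak $\epsilon$-core to a family of pointwise inequalities in $\sigma^2$, one per pair $(\coS,i)$ with $\coS\subsetneq\coN$ and $i\in\coS$, and then to characterise each such inequality through a single quadratic in $\sigma^2$.

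First, since $V^\zf(\coS)$ is a singleton for every $\coS$ by \eqref{eq:ZFV}, the only candidate core profile lies in $V^\zf(\coN)$, and a deviation via $\coS$ blocks it iff every $i\in\coS$ achieves a strictly higher rate, after subtracting the overhead $\epsilon_i$, under $V^\zf(\coS)$. Hence the weak $\epsilon$-core is non-empty iff \eqref{eq:GrandCoalZF1} holds. Substituting the rate \eqref{eq:Rate} and using \eqref{eq:A}--\eqref{eq:B}, the pointwise inequality is equivalent to
\begin{equation*}
\frac{A_{i,\coS}+B_{i,\coS}+\sigma^2}{B_{i,\coS}+\sigma^2}\leq 2^{\epsilon_i}\,\frac{C_i+\sigma^2}{\sigma^2}.
\end{equation*}
Clearing the positive denominators $\sigma^2$ and $B_{i,\coS}+\sigma^2$ and collecting powers of $\sigma^2$ yields the quadratic
\begin{equation*}
f_{i,\coS}(\sigma^2):=(2^{\epsilon_i}-1)(\sigma^2)^2+\Psi_{i,\coS}\,\sigma^2+2^{\epsilon_i}B_{i,\coS}C_i\geq 0,
\end{equation*}
whose linear coefficient matches \eqref{eq:Delta} and whose discriminant matches \eqref{eq:Psi}.

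Then I classify when $f_{i,\coS}\geq 0$ on $[0,\infty)$, using that the leading coefficient is strictly positive when $\epsilon_i>0$, so the parabola opens upward. If $\Delta_{i,\coS}<0$, $f_{i,\coS}$ is strictly positive everywhere. If $\Psi_{i,\coS}\geq 0$, the vertex abscissa $-\Psi_{i,\coS}/(2(2^{\epsilon_i}-1))$ is non-positive, and $f_{i,\coS}(0)=2^{\epsilon_i}B_{i,\coS}C_i\geq 0$, so $f_{i,\coS}\geq 0$ on $[0,\infty)$. Both sub-cases yield $\bar\sigma_{i,\coS}^2=\infty$ and $\underline\sigma_{i,\coS}^2=0$, consistent with the first branches of \eqref{eq:sigma_ub}--\eqref{eq:sigma_lb}. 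In the remaining case $\Delta_{i,\coS}\geq 0$ and $\Psi_{i,\coS}<0$, Vi\`ete gives sum of roots $-\Psi_{i,\coS}/(2^{\epsilon_i}-1)>0$ and product $2^{\epsilon_i}B_{i,\coS}C_i/(2^{\epsilon_i}-1)\geq 0$, hence both roots are non-negative; they equal $\bar\sigma_{i,\coS}^2$ and $\underline\sigma_{i,\coS}^2$, and the pointwise inequality holds iff $\sigma^2\leq\bar\sigma_{i,\coS}^2$ or $\sigma^2\geq\underline\sigma_{i,\coS}^2$.

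Finally I aggregate over all $(\coS,i)$. Setting $\bar\sigma^2$ and $\underline\sigma^2$ as in \eqref{eq:cond_grandCoal1}, the condition $\sigma^2\leq\bar\sigma^2$ (respectively $\sigma^2\geq\underline\sigma^2$) guarantees that for every pair either a ``trivial'' branch of the case analysis applies or the low-noise (respectively high-noise) side of the root constraint is met, so every $f_{i,\coS}(\sigma^2)\geq 0$ and hence \eqref{eq:GrandCoalZF1} holds. Conversely, if $\sigma^2$ violates both thresholds, one selects the pair $(\coS,i)$ attaining $\bar\sigma^2$ or $\underline\sigma^2$ in the non-trivial branch; this gives $f_{i,\coS}(\sigma^2)<0$ and exhibits a blocking deviation for $\coS$, proving necessity. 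The step I expect to require the most care is the Case 3 bookkeeping: confirming the sign of the leading coefficient for $\epsilon_i>0$, the non-negativity of both roots from Vi\`ete, and the identification with \eqref{eq:sigma_ub}--\eqref{eq:sigma_lb}; the passage from \eqref{eq:GrandCoalZF1} to the quadratic and the set-theoretic aggregation are routine once the case split is in place.
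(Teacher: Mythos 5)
Your proposal is correct and follows essentially the same route as the paper's Appendix A: reduce \eqref{eq:GrandCoalZF1} to the quadratic $(2^{\epsilon_i}-1)(\sigma^2)^2+\Psi_{i,\coS}\sigma^2+2^{\epsilon_i}C_iB_{i,\coS}\geq 0$, split on the sign of $\Delta_{i,\coS}$ and of $\Psi_{i,\coS}$ via Vi\`ete (the paper's Cases II--IV), and aggregate by taking the min of the smaller roots and the max of the larger roots over all $(i,\coS)$. Your added remark that $V^\zf(\coS)$ is a singleton, so non-emptiness is exactly the pointwise condition \eqref{eq:GrandCoalZF1}, is a small clarification the paper leaves implicit; otherwise the two arguments coincide, including the final aggregation step, which you carry out at the same level of detail as the paper.
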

\begin{IEEEproof}
The proof is provided in Appendix \ref{proof:grandCoal}.
\end{IEEEproof}

Proposition \ref{thm:grandCoal} implies that for all SNR values $1/\sigma^2 \geq 1/\bar{\sigma}^2$ and $1/\sigma^2 \leq 1/\underline{\sigma}^2$ it is profitable for all players to jointly perform ZF. In the case where the overhead $\epsilon_i = 0$ for all players, the conditions under which the core is not empty have been given in \cite[Proposition 1]{Mochaourab2011b} and restated below as a special case of Proposition \ref{thm:grandCoal}.

\begin{corollary}\label{thm:zeroOverhead_core}
For $\epsilon_i = 0$ for all $i \in \coN$, the weak $\epsilon$-core is not empty if and only if $\sigma^2\leq \hat{\sigma}^2$ where
\begin{equation}\label{eq:cond_grandCoal_constants}
\hat{\sigma}^2 := \min_{\coS \subset \coN} \min_{i\in \coS} \br{ {B_{i,\coS} C_{i}}/({A_{i,\coS} - C_{i}})}.
\end{equation}
\end{corollary}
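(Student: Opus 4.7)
With $\epsilon_i = 0$ for every $i \in \coN$, Definition \ref{def:sCore} reduces to the classical core, so my plan is to derive the nonemptiness condition directly from the stability requirement $u_i(V^\zf(\coS)) \leq u_i(V^\zf(\coN))$ for every proper coalition $\coS \subset \coN$ and every $i \in \coS$. I would \emph{not} obtain the result by substituting $\epsilon_i = 0$ into the closed-form bounds of Proposition \ref{thm:grandCoal} verbatim, because the denominator $2^{\epsilon_i} - 1$ vanishes there; instead I would repeat the algebraic step from that proof in the absence of the overhead.

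First I would evaluate the rate function \eqref{eq:Rate} on the two strategy profiles supplied by \eqref{eq:ZFV}. For the grand coalition every transmitter zero-forces to all other receivers, so all interference terms vanish and $u_i(V^\zf(\coN)) = \log_2(1 + C_i/\sigma^2)$. For a proper coalition $\coS$ the links outside $\coS$ transmit with MRT, producing the aggregate interference $B_{i,\coS}$ at receiver $i$, hence $u_i(V^\zf(\coS)) = \log_2(1 + A_{i,\coS}/(B_{i,\coS} + \sigma^2))$. Monotonicity of $\log_2$ converts the stability inequality into the SINR inequality $A_{i,\coS}/(B_{i,\coS} + \sigma^2) \leq C_i/\sigma^2$, and clearing denominators yields $(A_{i,\coS} - C_i)\sigma^2 \leq C_i B_{i,\coS}$.

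Next I would invoke the key monotonicity property of zero forcing: enlarging $\coS$ to $\coN$ adds further orthogonality constraints to the projection in \eqref{eq:beamZFCoal}, so the resulting signal power can only decrease, giving $C_i \leq A_{i,\coS}$. When the inequality is strict, dividing produces $\sigma^2 \leq C_i B_{i,\coS}/(A_{i,\coS} - C_i)$, which is exactly the quantity inside the minimization of \eqref{eq:cond_grandCoal_constants}. The degenerate case $A_{i,\coS} = C_i$ leaves the linear inequality automatically satisfied (the ratio is conventionally $+\infty$ and does not tighten the minimum), so taking the intersection of the constraints over all $\coS \subset \coN$ and $i \in \coS$ gives precisely $\sigma^2 \leq \hat{\sigma}^2$. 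Since each manipulation is an equivalence, the condition is both necessary and sufficient.

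I do not anticipate a real obstacle here: the corollary is essentially the zero-overhead limit of the argument behind Proposition \ref{thm:grandCoal}. The only subtle points are the monotonicity $C_i \leq A_{i,\coS}$, a standard property of orthogonal projections onto nested subspaces, and the careful handling of the degenerate equality case needed to ensure that the minimum in \eqref{eq:cond_grandCoal_constants} is well defined.
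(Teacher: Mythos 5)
Your proposal is correct and follows essentially the same route as the paper: the appendix proof of Proposition \ref{thm:grandCoal} writes the stability condition exactly as your SINR inequality, cross-multiplies, and in its Case I ($\epsilon_i = 0$) the quadratic degenerates to the linear inequality $(C_i - A_{i,\coS})\sigma^2 + C_i B_{i,\coS} \geq 0$, yielding $\sigma^2 \leq C_i B_{i,\coS}/(A_{i,\coS}-C_i)$ and then the min over all $i$ and $\coS$. Your explicit justification that $C_i \leq A_{i,\coS}$ via nested orthogonal projections, and your handling of the degenerate equality case, are small but welcome additions that the paper leaves implicit.
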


Interestingly, in comparison to the core without deviation overhead, the weak $\epsilon$-core is not empty above an SNR threshold and also below an SNR threshold. The weak $\epsilon$-core is also not empty at low SNR is due to the fact that the noise power at low SNR is much larger than the interference. Then, the performance difference between joint ZF beamforming in the grand coalition compared to the performance of another beamforming strategy is not large enough to compensate for the overhead leading to the formation of the grand coalition.

The derivation of the conditions for nonempty strong $\epsilon$-core in Definition \ref{def:wCore} is analogous to that in Proposition \ref{thm:grandCoal} in which for a player $i \in \coS$, the term $\epsilon_i$ is replaced with $\epsilon_i / \abs{\coS}$. It must be noted that in order to calculate the conditions for nonempty weak and strong $\epsilon$-core in Proposition \ref{thm:grandCoal}, an exhaustive search over $2^\abs{\coN} - 1$ nonempty subsets of $\coN$ must be performed.

\begin{figure}[t]
  \centering
  \includegraphics[width=8.8cm,clip]{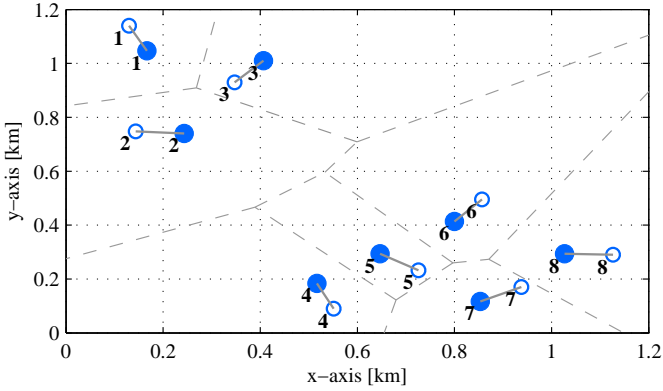}
  \caption{A setting with $8$ links and each transmitter uses $8$ antennas. In order to include the effect of distances between the links on the received power gains we use the following path loss model: Let $d_{k\ell}$ be the distance between transmitter $k$ and a receiver $\ell$ in meters and $\delta$ be the path loss exponent, we write the channel vector $\bh_{k\ell} = d^{-\delta/2}_{k\ell}\tilde{\bh}_{k\ell}/\norm{\tilde{\bh}_{k\ell}}$ with $\tilde{\bh}_{k\ell} \sim \mathcal{CN}(0,\mat{I})$. We define the SNR as SNR$ = d_{kk}^{-\delta} / \sigma^2$ and we set $\delta = 3$.}\label{fig:topo}
\end{figure}

In \figurename~\ref{fig:emptycore}, we plot the conditions for empty strong and weak $\epsilon$-core from Proposition~\ref{thm:grandCoal} for the setting in \figurename~\ref{fig:topo}. Only for an overhead strictly larger than zero does a lower SNR threshold exists for nonempty strong and weak $\epsilon$-core. As the overhead increases, the lower threshold $1/\bar{\sigma}^2$ increases and the upper threshold $1/\underline{\sigma}^2$ decreases. Consequently, the SNR region where all players have an incentive to jointly perform ZF transmission becomes larger. It is shown that if the conditions for the weak $\epsilon$-core not to be empty are satisfied then they are also satisfied for the strong $\epsilon$-core. The operation of wireless systems is usually in the range between $5$ and $20$ dB SNR. It can be seen from \figurename~\ref{fig:emptycore} that the conditions for the stability of the grand coalition with ZF beamforming requires relatively higher overhead measure at the links.

There is a relation between the result in Proposition \ref{thm:grandCoal} and the notion of \emph{cost of stability} \cite{Bachrach2009}. In \cite{Bachrach2009}, it is assumed that all users have the same $\epsilon_i$. The cost of stability specifies the smallest overhead $\epsilon_i$ such that the weak $\epsilon$-core is not empty. \figurename~\ref{fig:emptycore} illustrates the cost of stability which corresponds to the overhead $\epsilon$ on the boundary points of the region where the weak $\epsilon$-core is empty. That is, for a fixed SNR value, the boundary point is the smallest overhead value with which the weak $\epsilon$-core is not empty.

\begin{figure}[t]
  \centering
  \includegraphics[width=8.8cm,clip]{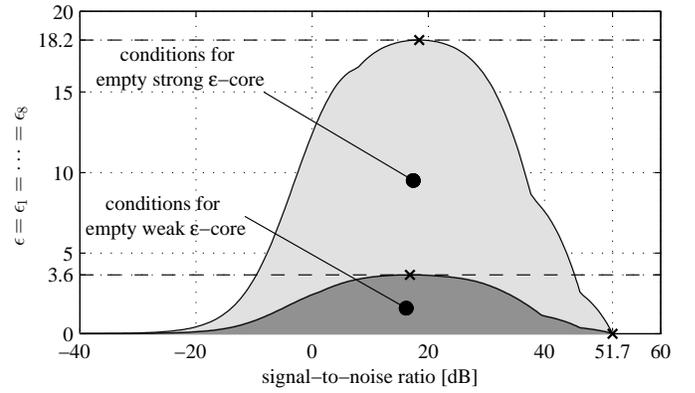}
  \caption{Conditions for empty strong and weak $\epsilon$-cores are plotted in the filled regions for the setting with $8$ links in \figurename~\ref{fig:topo}.}\label{fig:emptycore}
\end{figure}

In \figurename~\ref{fig:emptycore}, it is shown that above a certain overhead level, the $\epsilon$-core is nonempty for any SNR value. This overhead level is obtained during the proof of Proposition \ref{thm:grandCoal} and stated here.
\begin{corollary}\label{thm:globalgrandcoal}
The weak $\epsilon$-core is not empty for any $\sigma^2 > 0$ if and only if $\Delta_{i,\coS} < 0$ or $\Psi_{i,\coS} \geq 0$, for all $i\in \coS, \coS \subset \coN$, where $\Delta_{i,\coS}$ and $\Psi_{i,\coS}$ are given, respectively, in \eqref{eq:Psi} and \eqref{eq:Delta} in Proposition \ref{thm:grandCoal}.
\end{corollary}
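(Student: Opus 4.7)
The plan is to derive Corollary~\ref{thm:globalgrandcoal} as a direct consequence of Proposition~\ref{thm:grandCoal} by asking exactly when the admissible noise interval $[\underline{\sigma}^2,\bar{\sigma}^2]$ specified there covers all of $(0,\infty)$. Since Proposition~\ref{thm:grandCoal} gives ``weak $\epsilon$-core nonempty iff $\underline{\sigma}^2 \le \sigma^2 \le \bar{\sigma}^2$,'' the statement ``nonempty for every $\sigma^2 > 0$'' is equivalent to the two conditions $\bar{\sigma}^2 = \infty$ and $\underline{\sigma}^2 = 0$ holding simultaneously.

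First I would translate these two global conditions on the min/max in \eqref{eq:cond_grandCoal1} into per-pair conditions. Because $\bar{\sigma}^2 = \min_{i,\coS} \bar{\sigma}_{i,\coS}^2$ and each $\bar{\sigma}_{i,\coS}^2 \in (0,\infty]$, we have $\bar{\sigma}^2 = \infty$ if and only if $\bar{\sigma}_{i,\coS}^2 = \infty$ for every $i \in \coS$ and every $\coS \subset \coN$; inspecting \eqref{eq:sigma_ub}, this happens exactly when $\Delta_{i,\coS} < 0$ or $\Psi_{i,\coS} \ge 0$. Symmetrically, $\underline{\sigma}^2 = \max_{i,\coS}\underline{\sigma}_{i,\coS}^2$ with $\underline{\sigma}_{i,\coS}^2 \in [0,\infty)$, so $\underline{\sigma}^2 = 0$ if and only if every $\underline{\sigma}_{i,\coS}^2 = 0$, and \eqref{eq:sigma_lb} shows this is again equivalent to $\Delta_{i,\coS}<0$ or $\Psi_{i,\coS}\ge 0$.

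The essential observation is that both conditions coincide, so a single condition on $(\Delta_{i,\coS},\Psi_{i,\coS})$ suffices. To make the step from ``$\underline{\sigma}_{i,\coS}^2=0$'' to ``$\Delta_{i,\coS}<0$ or $\Psi_{i,\coS}\ge 0$'' airtight, I would verify that in the complementary case $\Delta_{i,\coS}\ge 0$ and $\Psi_{i,\coS}<0$ the expression
\begin{equation*}
\underline{\sigma}_{i,\coS}^2 \;=\; \frac{-\Psi_{i,\coS} + \sqrt{\Delta_{i,\coS}}}{2(2^{\epsilon_i}-1)}
\end{equation*}
is strictly positive: the denominator is positive (interpreting Corollary~\ref{thm:globalgrandcoal} in the regime $\epsilon_i>0$ as in Proposition~\ref{thm:grandCoal}) and the numerator satisfies $-\Psi_{i,\coS}>0$ and $\sqrt{\Delta_{i,\coS}}\ge 0$. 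Hence this case is genuinely excluded.

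There is really no hard step here; the only subtlety is being careful with the interval $[\underline{\sigma}^2,\bar{\sigma}^2]$ being allowed to degenerate and with the $\epsilon_i>0$ assumption inherited from Proposition~\ref{thm:grandCoal}. Combining the two equivalences gives precisely the claim: the weak $\epsilon$-core is nonempty for every $\sigma^2>0$ if and only if $\Delta_{i,\coS}<0$ or $\Psi_{i,\coS}\ge 0$ for all $i\in\coS$ and all $\coS \subset \coN$.
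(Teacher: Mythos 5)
Your proposal is correct and follows essentially the same route as the paper: the paper reads the corollary off Cases II--IV of the case study in Appendix~\ref{proof:grandCoal} (the condition $\Delta_{i,\coS}<0$ or $\Psi_{i,\coS}\geq 0$ is exactly the union of Cases II and III, where $f(\sigma^2)\geq 0$ holds for every $\sigma^2>0$), and your argument simply repackages this by observing that this is precisely the branch of \eqref{eq:sigma_ub} and \eqref{eq:sigma_lb} in which $\bar{\sigma}_{i,\coS}^2=\infty$ and $\underline{\sigma}_{i,\coS}^2=0$ for every pair $(i,\coS)$. Your explicit check that the complementary branch yields a strictly positive $\underline{\sigma}_{i,\coS}^2$ (so the interval genuinely excludes small $\sigma^2$) matches the paper's observation in Case IV that both roots are then positive.
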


From Corollary \ref{thm:zeroOverhead_core}, in the case where the overhead $\epsilon_i$ is zero for all $i \in \coN$, the core is not empty only above an SNR threshold. Next, we show that also in this case, a player does not have an incentive to build a coalition with another player at low SNR, i.e.,
\begin{equation}\label{eq:singleCoal}
u_i(V^\zf(\{i\})) > u_i(V^\zf(\coS)), \forall i \in \coS, \forall \coS \subseteq \coN, \abs{\coS} > 1.
\end{equation}
Notice that $V^\zf(\{i\}) = (\bw^\mrt_1,...,\bw^\mrt_K)$. The conditions for \eqref{eq:singleCoal} to hold are given in \cite[Proposition 2]{Mochaourab2011b} and restated here.
\begin{proposition}\label{thm:singleCoal} Single-player coalitions exist if
\begin{equation}\label{eq:cond_singleCoal}
\sigma^2 > \check{\sigma}^2:=\max_{\coS \subseteq \coN} \max_{i\in \coS} \br{ \frac {A_{i,\coS} B_{i,\{i\}} -\snorm{\bh_{ii}} B_{i,\coS}}{\snorm{\bh_{ii}} - A_{i,\coS} }},
\end{equation}
\noindent with $A_{i,\coS}$ and $B_{i,\coS}$ defined in \eqref{eq:A} and \eqref{eq:B}, respectively. 
\end{proposition}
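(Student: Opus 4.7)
The plan is to reduce the claim to a direct SINR comparison for each pair $(i,\coS)$ with $i\in\coS$ and $|\coS|>1$, and then take a worst-case over all such pairs. The key observation is that by the $\gamma$-model, the noncooperative action $V^\zf(\{i\})$ is exactly joint MRT, while $V^\zf(\coS)$ has every member of $\coS$ performing ZF to the other members of $\coS$ and every non-member playing MRT. This makes the two achievable rates of link $i$ fully explicit in terms of the constants $A_{i,\coS}$, $B_{i,\coS}$, $B_{i,\{i\}}$ and $\norm{\bh_{ii}}^2$ already defined in \eqref{eq:A}, \eqref{eq:B}.

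First, I would compute $u_i(V^\zf(\{i\}))$. Since $\bw^\mrt_i = \bh_{ii}/\norm{\bh_{ii}}$, the signal term is $\sabs{\bh_{ii}^\H\bw^\mrt_i}=\snorm{\bh_{ii}}$, and the interference term is $\sum_{j\neq i}\sabs{\bh_{ji}^\H\bw^\mrt_j} = B_{i,\{i\}}$, so that $u_i(V^\zf(\{i\})) = \ld(1+\snorm{\bh_{ii}}/(B_{i,\{i\}}+\sigma^2))$. Next I would compute $u_i(V^\zf(\coS))$: transmitters inside $\coS$ produce zero interference at receiver $i$ by construction of $\bwzf{j}{\coS}$, and transmitters outside $\coS$ still play MRT, giving an interference term of exactly $B_{i,\coS}$, so that $u_i(V^\zf(\coS)) = \ld(1+A_{i,\coS}/(B_{i,\coS}+\sigma^2))$.

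Using monotonicity of $\ld(1+\cdot)$, the desired strict inequality in \eqref{eq:singleCoal} is equivalent to $\snorm{\bh_{ii}}(B_{i,\coS}+\sigma^2) > A_{i,\coS}(B_{i,\{i\}}+\sigma^2)$. I would then collect the $\sigma^2$ terms on one side to obtain $\sigma^2(\snorm{\bh_{ii}} - A_{i,\coS}) > A_{i,\coS}B_{i,\{i\}} - \snorm{\bh_{ii}}B_{i,\coS}$. Since $\bw_{i\rightarrow\coS}^\zf$ is a unit-norm vector obtained by projecting $\bh_{ii}/\norm{\bh_{ii}}$ (before normalization) onto the null space of $\mat{Z}_{i\rightarrow\coS}$, one has $A_{i,\coS}\le\snorm{\bh_{ii}}$, with strict inequality in the generic case where $\bh_{ii}$ is not already orthogonal to the interfering channels. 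Under this (generic) condition, dividing through preserves the inequality and yields the explicit per-pair threshold that appears inside the maximization in \eqref{eq:cond_singleCoal}.

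Finally, I would take the maximum of the right-hand side over all $i\in\coS$ and all $\coS\subseteq\coN$ with $|\coS|>1$: whenever $\sigma^2$ exceeds this maximum, the per-pair strict inequality holds for every $(i,\coS)$ simultaneously, which is precisely \eqref{eq:singleCoal}. The main (minor) obstacle is the degenerate case $A_{i,\coS}=\snorm{\bh_{ii}}$, in which ZF incurs no projection loss for link $i$; there the original inequality reduces to $\snorm{\bh_{ii}}B_{i,\coS} > \snorm{\bh_{ii}}B_{i,\{i\}}$, which fails because $B_{i,\{i\}}\ge B_{i,\coS}$, so the corresponding term is simply excluded from the maximization (or treated as $-\infty$), and Proposition \ref{thm:singleCoal} gives only a sufficient, not necessary, condition, which is consistent with the statement.
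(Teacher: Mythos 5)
Your derivation is correct and is exactly the intended argument: the paper itself gives no proof here (it simply restates \cite[Proposition 2]{Mochaourab2011b}), and the natural proof is precisely your SINR comparison $\snorm{\bh_{ii}}/(B_{i,\{i\}}+\sigma^2) > A_{i,\coS}/(B_{i,\coS}+\sigma^2)$ rearranged into a linear inequality in $\sigma^2$ and maximized over all pairs $(i,\coS)$ with $\abs{\coS}>1$. One small correction on your degenerate case: when $A_{i,\coS}=\snorm{\bh_{ii}}$ the strict inequality in \eqref{eq:singleCoal} fails for that pair at \emph{every} $\sigma^2$ (since $B_{i,\{i\}}\geq B_{i,\coS}$), so the corresponding term must be treated as $+\infty$ in the maximization --- rendering the hypothesis $\sigma^2>\check{\sigma}^2$ vacuous --- rather than excluded or set to $-\infty$, which would make the stated sufficient condition false in that (measure-zero) event.
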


\subsection{Coalitional Game with Wiener Filter Precoding}
In this section, we assume the players cooperate by performing WF precoding with the players in their own coalition. For link $i$ in coalition $\coS$, transmitter $i$'s WF precoding is
\begin{equation}\label{eq:beamEIGCoal}
\bwwf{i}{\coS} = \frac{({\bI\sigma^2 + \sum_{j\in \coS\backslash \{i\}} \bh_{ij} \bh_{ij}^\H})^{-1}\bh_{ii}}{\norm{({\bI\sigma^2 + \sum_{j\in \coS\backslash \{i\}} \bh_{ij} \bh_{ij}^\H})^{-1}\bh_{ii}}}.
\end{equation}
In comparison to ZF beamforming in \eqref{eq:beamZFCoal}, WF precoding is suitable when the number of antennas at the transmitters is smaller than the number of links in the coalition. WF precoding in \eqref{eq:beamEIGCoal} has interesting behavior for asymptotic SNR cases \cite{Joham2005}. In the high SNR regime ($\sigma^2 \rightarrow 0$), $\bwwf{i}{\coS}$ converges to $\bwzf{i}{\coS}$ in \eqref{eq:beamZFCoal}. In the low SNR regime ($\sigma^2 \rightarrow \infty$), $\bwwf{i}{\coS}$ converges to $\bw^\mrt_i$ in \eqref{eq:MRT_transmission}.

The game in coalitional form with WF precoding is $\langle \coN, \setX, V^\eig, (R_k)_{k\in \coN} \rangle$ where the mapping $V^\eig$ which defines the strategy profile according to WF cooperation scheme is
\begin{multline}
V^\eig(\coS) = \{(\bw_k)_{k \in \coN} \in \setX: \bw_k = \bwwf{k}{\coS} \text{ for } k \in \coS, \\ \bw_\ell = \bw_\ell^\mrt \text{ for } \ell \in \coN\backslash \coS\}.
\end{multline}
Conditions for nonempty strong and weak $\epsilon$-core of the coalitional game with WF precoding in terms of SNR thresholds are hard to characterize because the noise power in \eqref{eq:beamEIGCoal} is inside the matrix inverse. 

Next, we will study coalition formation games between the links and use the WF and ZF beamforming schemes as cooperative beamforming strategies.%
\section{Coalition Formation}\label{sec:coalitionformation}
In the previous section, we have used the strong and weak $\epsilon$-core as solutions to our coalitional game which only consider the feasibility of the formation of the grand coalition. In this section, we enable the formation of several disjoint coalitions to form a coalition structure. A \emph{coalition structure} $\coCS$ is a partition of $\coN$, the grand coalition, into a set of disjoint coalitions $\{\coS_1,...,\coS_L\}$ where $\bigcup^L_{j = 1} \coS_j = \coN$ and $\bigcap^L_{j = 1}\coS_j = \emptyset$.

Let $\setP$ denote the set of all partitions of $\coN$. We consider the following game in partition form \cite{Thrall1963}:\footnote{In \cite{Thrall1963}, the game in partition form is represented by $\langle \coN, U \rangle$ where $U:\setP \rightarrow \mathbb{R}^K$. We change the notation to be analogous to the coalitional game formulation in \eqref{eq:coalitional_game}.}
\begin{equation}
\langle \coN, \setX, F, (u_i)_{i\in \coN} \rangle,
\end{equation}
\noindent where $F: \setP \rightarrow \setX$ is called the \emph{partition function}. We consider two scenarios for player cooperation in a coalition. The scenarios correspond to ZF and WF transmissions. Given a coalition structure $\coCS$, the strategy profile of the players according to ZF or WF is defined by
\begin{equation}
F^{\text{bf}}(\coCS) := \{(\bw_i)_{i \in \coN} \in \setX: \bw_i = \bw_{i\rightarrow \coS_j}^{\text{bf}} \text{ for } i \in \coS_j, \coS_j \in \coCS\},
\end{equation}
\noindent where $\text{bf} = \{\text{\small{ZF},\small{WF}}\}$ with $\bwzf{i}{\coS_j}$ and $\bwwf{i}{\coS_j}$ defined in \eqref{eq:beamZFCoal} and \eqref{eq:beamEIGCoal}, respectively. Notice that if $\abs{\coS_j}=1$ and $i \in \coS_j$, then $\bwzf{i}{\coS_j} = \bwwf{i}{\coS_j} = \bw_i^\mrt$. For a coalition structure $\coCS$, $F^\zf(\coCS)$ is a strategy profile in which each player chooses ZF to the players in his coalition. Similarly, $F^\eig(\coCS)$ is the strategy profile when WF is applied. In our case, the coalition structure uniquely determines the associated strategy of each player. Consequently, the payoff of each player is directly related to the formed coalition structure.

The payoffs of the members of a coalition depend on which coalitions form outside. The effects caused by other coalitions on a specific coalition are called \emph{externalities} \cite{Yi1997}. In our game, due to the interference coupling between the links, externalities exist. These are categorized under negative and positive externalities. In the case of negative externalities, the merging of two coalitions reduces the utility of the other coalitions. While positive externalities lead to an increase in the payoff of other coalitions when two coalitions merge. In our case, if two coalitions merge, both types of externalities can occur.

The dynamics that lead to a specific coalition structure are the study of coalition formation games \cite{Aumann1974,Marini2007}. We are interested in coalition structures which are stable. The main steps to describe the dynamics of coalition formation to reach a stable coalition structure are the following: First, we must specify a \emph{deviation rule}. This rule describes the feasible transition from one coalition structure to the next. The second step is to define a \emph{comparison relation} between different coalition structures. Accordingly, a feasible deviation from one coalition structure to the next is acceptable if this leads to a preferable coalition structure. Afterwards, the stability of the coalition formation process must be studied. For this purpose, a stability concept for coalition structures must be specified.

A set $\coT$ of at most $q$ coalitions in some arbitrary coalition structure $\coCS_0 \in \setP$ can \emph{merge} to form a single coalition. In doing so, the coalition structure $\coCS_0$ changes to $\coCS_1 \in \setP$. We formally define this mechanism as follows.
\begin{definition}[$q$-Deviation]\label{def:coalition_deviation}
The notation $\coCS_0 \overset{q,\coT}{\longrightarrow} \coCS_1$ indicates that the coalitions in $\coT$, where $\coT \subset \coCS_0 \in \setP$ and $\abs{\coT} \leq q$ merge to form the coalition $\coS = \bigcup \coT$. Then, the coalition structure $\coCS_0$ changes to $\coCS_1 = \coCS_0 \setminus \coT \cup \coS$ in the set of coalition structures $\setP$.
\end{definition}
The motivation behind the merging deviation model in Definition \ref{def:coalition_deviation} is that coalition formations starts from the noncooperative state of single-player coalitions and hence cooperation requires merging of coalitions. The deviation complexity in Definition \ref{def:coalition_deviation} is tunable through the parameter $q$. The larger $q$ is, the more complex it is to search for possible merging coalitions since the number of possible combinations increases with $q$. We study the complexity for this search in Section~\ref{sec:complexity}.

%
\begin{algorithm}[t]
\caption{\label{alg:coalition0} Coalition formation algorithm.}
\begin{algorithmic}[1]
\State \textbf{Input}: {$\coN$, $(\epsilon_1,\ldots,\epsilon_K)$, $q$}
\State \textbf{Initilize}: $k = 0$, $\coCS_{0} = \br{\br{1},\ldots,\br{K}}$
\For{$\coT \subset \coCS_k, \abs{\coT} \leq q$}
\State $\coCS_k \overset{q,\coT}{\longrightarrow} \coCS_{k+1}$;
\If{$\coCS_k \prec_\coT \coCS_{k+1}$}
\State $k = k+1$;
\State Go to Step 2;
\EndIf
\EndFor
\State \textbf{Output}: {$\coCS_k$}
\end{algorithmic}
\end{algorithm}

Given a coalition structure, we assume that the coalitions can communicate with each other in order to find possible performance improvement through deviation. A deviating set of coalitions $\coT$ according to $q$-Deviation in Definition \ref{def:coalition_deviation} can compare the resulting coalition structure $\coCS_1$ with the previous coalition structure $\coCS_0$ by the \emph{Pareto dominance relation} $\prec_\coT$ specified as follows:
\begin{equation}
\begin{split}\label{eq:preference}
\coCS_0 \prec_\coT \coCS_1 \Leftrightarrow &\\
\forall i \in \bigcup \coT &: u_i({F}^\text{bf}(\coCS_0)) - \epsilon_i \leq u_i({F}^\text{bf}(\coCS_1)), \text{ and }\\
\exists j \in \bigcup \coT &: u_j({F}^\text{bf}(\coCS_0)) - \epsilon_i < u_j({F}^\text{bf}(\coCS_1)),
\end{split}
\end{equation}
\noindent with $\text{bf} = \{\text{\small{ZF},\small{WF}}\}$. As in the definition of the weak and strong $\epsilon$-core in the previous section, we include the overhead $\epsilon_i \geq 0$ in \eqref{eq:preference} for a deviating player $i$. In \eqref{eq:preference}, the overhead $\epsilon_i$ is subtracted from the current utility in $\coCS_0$ when comparing it to the utility in the new coalition structure $\coCS_1$. The motivation for this model is as follows: the overhead required for the utility comparison and communication with the members of $\coT$ should make coalition merging more attractive. Later in \eqref{eq:alpha} and \eqref{eq:alpha2} in Section~\ref{sec:simResults}, we specify two different overhead models for performance comparison.

According to \eqref{eq:preference}, the notation $\coCS_0 \prec_\coT \coCS_1$ means that each player in $\coT$ prefers the coalition structure $\coCS_1$ to $\coCS_0$. Note, that $\coCS_0 \prec_\coT \coCS_1$ indicates that the coalition structure $\coCS_1$ is preferred to $\coCS_0$ by the players in $\coT$ and the preferences of the players in $\coN\setminus\coT$ are not considered in the comparison. This choice enables the set of deviating coalitions $\coT$ to decide on their own if they want to merge without consulting the remaining players. However, note that the strategies of deviating coalitions do affect the utilities of all players.

Based on the deviation rule in Definition \ref{def:coalition_deviation} and the coalition preference in \eqref{eq:preference}, we formulate a binary relation to compare two coalition structures.
\begin{definition}[$q$-Dominance]\label{def:dominance}
The coalition structure $\coCS_1$ $q$-dominates $\coCS_0$, written as $\coCS_1 \gg^q \coCS_0$, if there exists a set of coalitions $\coT \subset \coCS_0$ such that $\coCS_0 \overset{q,\coT}{\longrightarrow} \coCS_{1}$, and $\coCS_1 \prec_{\coT} \coCS_{0}$.
\end{definition}

According to the previous definitions, we define the coalition formation game as $(\setP,\gg^q)$, where $\setP$ is the set of all coalition structures and $\gg^q$ is the dominance relation defined in Definition \ref{def:dominance}. The solution of the coalition formation game $(\setP,\gg^q)$ is a set of coalition structures with special stability properties. We use the \emph{coalition structure stable set} as a solution concept for $(\setP,\gg^q)$ adopted from \cite[p. 110]{Diamantoudi2007}.
\begin{definition}[Coalition Structure Stable Set]\label{def:stable_set}
The set of coalition structures $\setR \subset \setP$ is a coalition structure stable set of $(\setP,\gg^q)$ if $\setR$ is both internally and externally stable:
\begin{itemize}
\item $\setR$ is \emph{internally stable} if there do not exist $\coCS, \coCS' \in \setR$ such that $\coCS \gg^q \coCS'$,
\item $\setR$ is \emph{externally stable} if for all $\coCS \in \setP \setminus \setR$ there exists $\coCS' \in \setR$ such that $\coCS' \gg^q \coCS$.
\end{itemize}
\end{definition}
The coalition structure stable set for coalition formation games is a modification of the stable set solution concept of coalitional games in characteristic form which has been originally proposed in \cite{Neumann1944}. Unlike the core, the stable set is not necessarily unique.

The coalition structure stable set of $(\setP,\gg^q)$ has at least one element which is the grand coalition $\coN$. As $q$-Deviation (Definition \ref{def:coalition_deviation}) only allows merging of coalitions, the coalition structure $\coN$ is stable because no deviation is possible. We assume however that the players start their operation in the noncooperative state corresponding to the Nash equilibrium (Section \ref{sec:noncoop}). In order to reach a solution in the coalition structure stable set, we provide Algorithm \ref{alg:coalition0}.\footnote{Other initializing coalition structures can also be supported by Algorithm \ref{alg:coalition0} which may lead to different outcomes.} Step 4 finds a $q$-Deviation (Definition \ref{def:coalition_deviation}) by searching over all coalition merging possibilities given coalition structure $\coCS_k$ and $q$, the maximum number of coalitions that are allowed to merge. For a possible deviating coalition $\coT$, the resulting coalition structure $\coCS_{k+1}$ is compared with the previous coalition according to $q$-Dominance (Definition \ref{def:dominance}). If Step 5 is true, then the new coalition structure $\coCS_{k+1}$ is adopted and the index $k$ is incremented. If no deviating coalitions satisfy Step 5, then the algorithm terminates.

\begin{proposition}
Algorithm \ref{alg:coalition0} converges to a coalition structure in the coalition structure stable set of $(\setP,\gg^q)$.
\end{proposition}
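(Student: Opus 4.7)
The plan is to prove the statement in two main steps: (i) Algorithm~\ref{alg:coalition0} terminates in finitely many iterations, and (ii) its output $\coCS^*$ belongs to a coalition structure stable set $\setR \subset \setP$.

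For (i), I would argue that every iteration which actually updates the partition corresponds to a $q$-Deviation with $\abs{\coT} \geq 2$: if $\abs{\coT} = 1$ then $\coCS_{k+1} = \coCS_k$, so the strict part of the Pareto preference in \eqref{eq:preference} cannot be satisfied and Step~5 does not fire. Each accepted merger therefore reduces the cardinality of the partition by at least $\abs{\coT} - 1 \geq 1$. Since $\abs{\coCS_0} = K$ is finite and the cardinality is bounded below by $1$, the outer loop can accept at most $K - 1$ deviations before halting at some $\coCS^*$.

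For (ii), I would first observe that upon exiting the for-loop every admissible $\coT \subset \coCS^*$ has failed the test in Step~5, so by Definition~\ref{def:dominance} no $\coCS \in \setP$ satisfies $\coCS \gg^q \coCS^*$; i.e., $\coCS^*$ is $q$-undominated. I would then take $\setR$ to be the collection of all $q$-undominated coalition structures, so that $\coCS^* \in \setR$ by construction. Internal stability of $\setR$ is immediate, since no element of $\setR$ is dominated by any partition in $\setP$, let alone by another element of $\setR$.

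The main obstacle is external stability: every $\coCS \in \setP \setminus \setR$ must be $q$-dominated in a single deviation by some element of $\setR$. My plan is to exploit that $q$-Deviation only coarsens partitions. Starting from any dominated $\coCS$, iterating profitable mergers yields a finite chain culminating in an undominated $\tilde{\coCS} \in \setR$ by the termination argument applied to a hypothetical run of Algorithm~\ref{alg:coalition0} initialized at $\coCS$. I would then attempt to collapse the chain by replacing every sub-sequence of successive mergers that involve a freshly formed coalition with a single composite merger $\coT^* \subset \coCS$ whose union equals the corresponding coalition of $\tilde{\coCS}$. The delicate points are to guarantee $\abs{\coT^*} \leq q$ so that $\coT^*$ is an admissible $q$-Deviation, and that the chained preference inequalities in \eqref{eq:preference}, each charging an overhead $\epsilon_i$, aggregate into a valid single-step preference $\coCS \prec_{\coT^*} \tilde{\coCS}$ with only one $\epsilon_i$ per player; handling the case where chains of independent mergers from $\coCS$ would be required (and no single $\coT^*$ suffices) by pointing instead to an intermediate undominated structure obtainable by one of the independent mergers is the step I expect to demand the most care.
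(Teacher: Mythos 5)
Your termination argument and your observation that the output $\coCS^*$ is $q$-undominated (every admissible $\coT\subset\coCS^*$ fails the test in Step~5, so no $\coCS$ satisfies $\coCS\gg^q\coCS^*$) are both correct and match the first half of the paper's proof. The gap is in how you get from ``$\coCS^*$ is undominated'' to ``$\coCS^*$ is in a stable set.'' You propose to take $\setR$ to be the set of all $q$-undominated partitions and to prove that this $\setR$ is externally stable. That is a much stronger claim than the proposition requires, and your sketch for it does not go through. A single $q$-Deviation merges the coalitions in $\coT$ into \emph{one} coalition $\bigcup\coT$; a chain of profitable mergers can instead produce a partition in which several disjoint groups have coarsened (e.g.\ $\{\{1\},\{2\},\{3\},\{4\}\}$ evolving to $\{\{1,2\},\{3,4\}\}$), and no single $\coT^*$ realizes that transition, so the terminal undominated $\tilde{\coCS}$ of the chain need not dominate $\coCS$ at all. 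Even when the set-theoretic collapse is possible, the preference \eqref{eq:preference} charges $\epsilon_i$ once per step, so chaining $m$ steps only yields $u_i(F^{\text{bf}}(\coCS)) - m\,\epsilon_i \leq u_i(F^{\text{bf}}(\tilde{\coCS}))$ for the players involved, which is weaker than the single-step condition; and because of externalities the intermediate inequalities constrain only the players in each step's $\coT$, not the players of your composite $\coT^*$. Your fallback of pointing to an intermediate structure reached by one of the independent mergers also fails, since that intermediate structure is typically still dominated (the other merger remains profitable) and hence is not in your $\setR$.

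The fix is to argue as the paper does, which requires none of this: let $\setR$ be \emph{any} coalition structure stable set of $(\setP,\gg^q)$ (the paper takes its existence for granted, noting e.g.\ that the grand coalition is undominated). If $\coCS^*\notin\setR$, external stability of $\setR$ supplies some $\coCS'\in\setR$ with $\coCS'\gg^q\coCS^*$, contradicting the undominatedness of $\coCS^*$ that you already established. Hence $\coCS^*$ lies in every stable set, which is all the proposition claims. In short: you should use external stability of a given stable set as a hypothesis to derive a contradiction, not attempt to construct a stable set and verify external stability for it.
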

\begin{proof}
First, the algorithm is guaranteed to converge since only merging operations are allowed and the number of merging steps is finite. We prove by contradiction that the resulting coalition structure is in the coalition structure stable set. Let the solution of the algorithm be $\coCS_k$. If $\coCS_k$ is outside the coalition structure stable set of $(\setP,\gg^q)$, then according to external stability, there exists a coalition structure $\coCS'$ in the coalition structure stable set such that $\coCS' \gg^q \coCS_k$. However, if this is the case, $\coCS_k$ would not be a solution of Algorithm~\ref{alg:coalition0} since Algorithm~\ref{alg:coalition0} terminates when no coalition structure is found that $q$-dominates the obtained coalition structure. If $\coCS_k$ is inside the coalition structure stable set of $(\setP,\gg^q)$ and there is a coalition structure $\coCS''$ in the stable set such that $\coCS'' \gg^q \coCS_k$, then also $\coCS_k$ would not be a solution of Algorithm \ref{alg:coalition0}. Accordingly, the coalition structure $\coCS_k$ resulting from Algorithm \ref{alg:coalition0} must be in the coalition structure stable set of $(\setP,\gg^q)$.
\end{proof}

Next, we discuss the complexity for finding a set of coalitions which merge according to $q$-Deviation in Definition~\ref{def:coalition_deviation} and compare it to an analogous rule based on coalition splitting.

\subsection{Complexity}\label{sec:complexity}
\begin{figure}[t]
  \centering
  \includegraphics[width=8.8cm,clip]{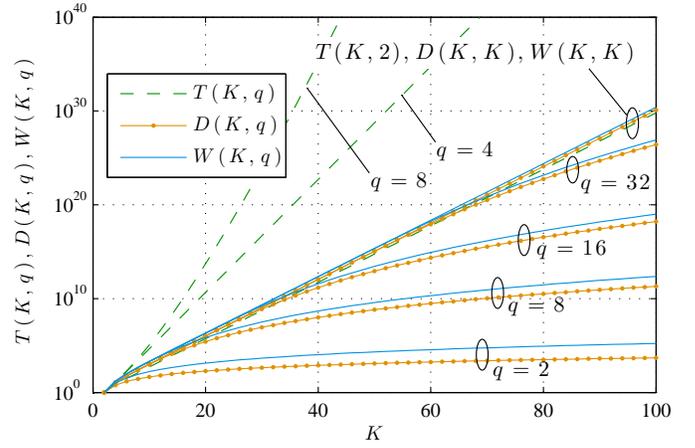}
  \caption{Illustration of the measures ${D}(K,q)$, $T(K,q)$, and $W(K,q)$ defined in \eqref{eq:comp_merging}, \eqref{eq:comp_spliting}, and \eqref{eq:complexity} respectively.}\label{fig:complexity}
\end{figure}
Our coalition formation algorithm is influenced by the following works on coalitional games in partition form \cite{Ray1997, Diamantoudi2007}. In \cite{Ray1997} a solution concept called \emph{equilibrium binding agreements} is proposed. The algorithm starts in the grand coalition, and only splitting operations can occur. The stability of a coalition structure is based on finding a sequence of splitting deviations which achieve higher utilities in the final coalition structure. Moreover, the players are considered farsighted, i.e., can anticipate the effects of their actions to the actions of other players. This mechanism is however shown to be inefficient and has motivated the extension in \cite{Diamantoudi2007}. In \cite{Diamantoudi2007}, a coalition formation algorithm is proposed where the splitting operation proposed in \cite{Ray1997} is adapted such that the coalitions that split can also merge in an arbitrary manner. Despite the increased complexity, Pareto efficiency is achieved only in special cases.

The reason for choosing coalition deviation based only on merging of coalitions in our work is twofold: First, since the users start their operation in the noncooperative state of single-player coalitions, coalition formation must be able to merge coalitions. Second, the splitting operation is far more complex than the merging operation, as discussed next.

According to Definition \ref{def:coalition_deviation}, the number of possible ways to merge a set of at least two and at most $q$ coalitions from the coalition structure $\coCS$ is
\begin{equation}\label{eq:comp_merging}
{D}(\abs{\coCS},q) = \sum\nolimits_{j=2}^{\min\{q,\abs{\coCS}\}} \binom{\abs{\coCS}}{j}.
\end{equation}
The expression above has no closed form. For the special case $q = \abs{\coCS}$, ${D}(\abs{\coCS},\abs{\coCS}) = 2^{\abs{\coCS}} - \abs{\coCS} - 1$ and for $q = 2$ we get ${D}(\abs{\coCS},2) =({\abs{\coCS}}^2 - {\abs{\coCS}})/2$. The worst case complexity corresponds to single-player coalitions because then the number of coalitions in $\coCS$ is largest and equal to $K$. This is the initial coalition structure which starts Algorithm \ref{alg:coalition0}.
\begin{lemma}\label{lem:complexity}
The growth rate of ${D}(\abs{\coCS},q)$ in \eqref{eq:comp_merging} is bounded by $\mathcal{O}(K^q)$ for fixed $q$.
\end{lemma}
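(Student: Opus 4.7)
My plan is a direct asymptotic estimate of the finite sum, exploiting the fact that $q$ is constant and only $K$ grows. I would first observe that for $K \geq q$ the upper limit of the summation in \eqref{eq:comp_merging} simplifies to $q$, so that
\begin{equation*}
D(K,q) = \sum_{j=2}^{q} \binom{K}{j}.
\end{equation*}
This reduces the problem to bounding each binomial coefficient individually.

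Next I would use the standard estimate $\binom{K}{j} = \frac{K(K-1)\cdots(K-j+1)}{j!} \leq \frac{K^j}{j!}$, which gives
\begin{equation*}
D(K,q) \leq \sum_{j=2}^{q} \frac{K^j}{j!}.
\end{equation*}
Since $q$ is fixed, the sum on the right is a polynomial in $K$ of degree exactly $q$, whose leading coefficient is $1/q!$. Hence $D(K,q) \leq c_q K^q$ for some constant $c_q$ depending only on $q$ and all sufficiently large $K$, which by definition yields $D(K,q) \in \mathcal{O}(K^q)$.

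I do not foresee any real obstacle here; the only subtlety is handling the $\min\{q,K\}$ in the summation limit, which is immediately resolved by restricting attention to the asymptotic regime $K \geq q$ (the case $K < q$ contributing at most a constant for fixed $q$, hence not affecting the big-$\mathcal{O}$ bound). A one-line remark that the bound is tight (since $\binom{K}{q} = K^q/q! + \mathcal{O}(K^{q-1})$ alone is already $\Theta(K^q)$) could be added for completeness, but is not required by the statement.
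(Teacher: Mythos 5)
Your proposal is correct and follows essentially the same route as the paper: both reduce to the case $K \geq q$, expand $D(K,q)=\sum_{j=2}^{q}\binom{K}{j}$, and bound each binomial coefficient by a degree-$j$ polynomial in $K$ (the paper writes out the falling-factorial products explicitly, you use the equivalent bound $\binom{K}{j}\leq K^{j}/j!$). Your explicit treatment of the $\min\{q,K\}$ summation limit and the tightness remark are minor refinements, not a different argument.
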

\begin{IEEEproof}
First, observe that ${D}(\abs{\coCS},q) \leq {D}(K,q)$ for $\abs{\coCS} \leq K$. According to the definitions in \cite[24.1.1]{Abramowitz1972}, and for $q \leq K$ we can calculate
\begin{align}
{D}(K,q) &= \sum\nolimits_{j=2}^{q} \binom{K}{j} = \sum\nolimits_{j=2}^{q} \binom{K}{K-j} \\ \nonumber
& = \frac{K(K-1)}{2!} + \frac{K(K-1)(K-2)}{3!} + \cdots \\
& \quad + \frac{\overbrace{K(K-1)\cdots (K - q + 1)}^\text{$q$ factors}}{q!}.
\end{align}
Accordingly, ${D}(K,q) \in \mathcal{O}(K^q)$.
\end{IEEEproof}

In comparison to the merging rule, the number of combinations for splitting a set $\coS$ to $k, k \leq \abs{\coS},$ subsets is given by the \emph{Stirling number of the second kind} \cite[Theorem 8.2.6]{Brualdi2004}:
\begin{equation}
{S}(\abs{\coS},k) = \frac{1}{k!} \sum\nolimits_{t=0}^{k}(-1)^{t}\binom{k}{t} (k - t)^{\abs{\coS}}.
\end{equation}
If we allow, as we did in the merging case, the splitting of $\coS$ into at least two and at most $q$ subsets, we get the following number of combinations:
\begin{equation}\label{eq:comp_spliting}
T(\abs{\coS},q) = \sum\nolimits_{k=2}^{q} S(\abs{\coS},k).
\end{equation}
For $q = \abs{\coS}$, the number above corresponds to the $\abs{\coS}^\text{th}$ Bell number minus one.\footnote{The $\abs{\coS}^\text{th}$ Bell number is $\sum_{k=0}^{q} S(\abs{\coS},k)$ and ${S}(\abs{\coS},0) + {S}(\abs{\coS},1)) = 1$.} In \figurename~\ref{fig:complexity}, we compare the complexity of the merging and splitting operations for increasing number of users $K$ and different values of $q$. It can be observed that the splitting operation requires much more searching combinations than the merging operation.

\subsection{Implementation in the MISO interference channel}
In Algorithm \ref{alg:coalition}, we provide an implementation of Algorithm \ref{alg:coalition0} in the MISO setting. The algorithm is initialized according to the Nash equilibrium, i.e., all coalitions are singletons. The term $r = \min\{q,\abs{\coCS_0}\}$ is the number of coalitions that are allowed to merge. The quantity $\Theta$ is initialized to zero and will aggregate the total number of utility comparisons during the algorithm. This measure will be used later in the simulations in Section~\ref{sec:simResults} to reveal numerically the complexity of the algorithm.

\begin{algorithm}[t]
\caption{\label{alg:coalition} Implementation of Algorithm \ref{alg:coalition0}.}
\begin{algorithmic}[1]
\State \textbf{Input}: {$\coN$, $(\epsilon_1,\ldots,\epsilon_K)$, $q$, $\text{bf} = \{\text{\small{ZF}, \small{WF}}\}$}
\State \textbf{Initialize}: $k = 0$, $\coCS_{0} = \br{\br{1},\ldots,\br{K}}$, $n^\text{iter}$, $r = \min\{q, \abs{\coCS_{0}}\}$, $\Theta = 0$
\While{$r\geq2$ \textbf{and} $\abs{\coCS_k} > 2$}
\State \parbox[t]{\dimexpr\linewidth-\algorithmicindent}{Each user generates lexicographically ordered r-combinations of $\coCS_k$: $\{\coT_{1},\ldots,\coT_{\binom{\abs{\coCS_k}}{r}}\}$; \strut}
\For{$\ell = 1:\binom{\abs{\coCS_k}}{r}$}
\State $n^\text{iter} = n^\text{iter} + 1;$
\State \parbox[t]{\dimexpr\linewidth-\algorithmicindent-\algorithmicindent}{Each user in $\coT_\ell$ temporarily generates $\coCS_{k+1}$ from $\coCS_{k}$ by merging $\coT_\ell$; \strut}
\State \parbox[t]{\dimexpr\linewidth-\algorithmicindent-\algorithmicindent}{Each user $i \in \coT_\ell$ compares his utility $u_i({F}^\text{bf}(\coCS_k)) - \epsilon_i$ to $u_i({F}^\text{bf}(\coCS_{k+1}))$; \strut}
\State \parbox[t]{\dimexpr\linewidth-\algorithmicindent-\algorithmicindent}{Increment the number of utility comparisons: $\Theta = \Theta + \sum_{\coS \in \coT_\ell} \abs{\coS}$; \strut}
\State \parbox[t]{\dimexpr\linewidth-\algorithmicindent-\algorithmicindent}{Each user in $\coT_\ell$ sends a message to the other users in $\coT_\ell$ from the set \{M1, M2, M3\}; \strut}
\If{all messages are M1 or M2 and some M1}
\State {Users in $\coT_\ell$ merge to form a single coalition; \strut}
\State {Users in $\coT_\ell$ send M4 to users outside $\coT_\ell$; \strut}
\State {$k = k+1$ and $r = \min\{q, \abs{\coCS_{k}}\}$; \strut}
\State {Go to Step 4; \strut}
\EndIf
\EndFor
\State $r = r - 1;$
\EndWhile
\State \textbf{Output}: {$\coCS_k, \Theta$}
\end{algorithmic}
\end{algorithm}%

We assume that when a coalition structure forms, such as $\coCS_k = \{\coS_1,\ldots,\coS_L\}$, each coalition is given a unique index which is commonly known to all users. Moreover we assume that each user knows the members of each coalition $\coS_i, i = 1,\ldots,L$. In Step 4 in Algorithm \ref{alg:coalition}, each user generates a list of $r$-subsets of the indexes $\{1,\ldots,L\}$ of the coalition structure $\coCS_k$ and these subsets are ordered in \emph{lexicographic order}. The generation of this list can be done using the algorithm provided in \cite[Section 4.4]{Brualdi2004}. In this manner, each user has the same ordered list of the subsets of $\{1,\ldots,L\}$ of size $r$. We assume that all users are synchronized in the sense that the users consider the same element $\coT_\ell$ of the generated list for a period of time in which negotiation takes place. Then, in Step 7 in Algorithm \ref{alg:coalition}, the new coalition structure $\coCS_{k+1}$ is temporarily formed by merging $\coT_\ell$. In Step 8, each user in $\coT_\ell$ evaluates his utility in the new coalition. In Step 9, the number of utility comparisons is incremented according to the number of users in $\coT_\ell$. Following Step 8 in which the utility comparisons are made, in Step 10 each user in $\coT_\ell$ communicates one of the following messages to the other users in $\coT_\ell$:\medskip

\begin{itemize}
\item (M1) $\leftarrow$ utility improves
\item (M2) $\leftarrow$ utility is the same
\item (M3) $\leftarrow$ utility decreases
\item (M4) $\leftarrow$ coalition forms
\end{itemize}
\medskip

Note that any of the above four messages can be exchanged between two links requiring two bits of information. In Step 11, the Pareto dominance relation defined in \eqref{eq:preference} is examined. If the condition in Step 11 is true then the coalitions in $\coT_\ell$ merge and all users outside $\coT_\ell$ are informed of the new coalition structure by exchanging the message M4. Algorithm \ref{alg:coalition} terminates if the grand coalition is formed or if $r$, the number of coalitions to merge, is less than two. 

Algorithm \ref{alg:coalition} provides a structured method to find a possible mergings between coalitions by exploiting the algorithm in \cite[Section 4.4]{Brualdi2004}. The complexity of Algorithm \ref{alg:coalition} is however related to the complexity of an exhaustive search with the restriction on the maximum number of coalitions that can merge, $q$.

\begin{theorem}\label{thm:complexity}
The number of iterations of Algorithm \ref{alg:coalition} is bounded as $n^\text{iter} \in \mathcal{O}(K^q)$.
\end{theorem}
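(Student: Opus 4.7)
The plan is a two-step counting argument: bound the number of successful merges, bound the work done while the coalition structure is held fixed, and combine the two via Lemma~\ref{lem:complexity}.

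First, I would show that the number of merges performed throughout the execution is at most $K - 1$. Each successful triggering of Steps~11--15 replaces $|\coT_\ell| \ge 2$ existing coalitions by a single new one, so $|\coCS_{k+1}| \le |\coCS_k| - 1$, strictly decreasing. Since the algorithm starts from $|\coCS_0| = K$ singletons and the while condition guards against $|\coCS_k| \le 2$, no more than $K - 1$ merges can occur over the entire run, partitioning the execution into at most $K$ contiguous phases in each of which the coalition structure is constant.

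Second, I would bound the for-loop iterations spent inside a single phase, that is, between two consecutive merges (or from initialization to the first merge, or from the last merge until the while loop exits). During such a phase, Step~17 decrements $r$ after each unsuccessful sweep, so $r$ ranges over $\{\min\{q, |\coCS_k|\}, \ldots, 2\}$, and for each value of $r$ the inner for loop visits at most $\binom{|\coCS_k|}{r}$ lexicographically ordered combinations generated in Step~4. Summing over $r$ gives the per-phase bound
\[
\sum_{r=2}^{q} \binom{|\coCS_k|}{r} = D(|\coCS_k|, q) \le D(K, q),
\]
using $|\coCS_k| \le K$ and the monotonicity of $D$ in its first argument. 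Invoking Lemma~\ref{lem:complexity}, which yields $D(K, q) \in \mathcal{O}(K^q)$ for fixed $q$, then gives the stated asymptotic.

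The main obstacle, modest though it is, is to verify cleanly that no for-loop iterations are double-counted across the jump back to Step~4 following a merge: once $\coCS_{k+1}$ has been formed, the for-loop counter resets, a fresh list of $r$-combinations is regenerated from the new partition, and the per-phase budget applies independently to each phase, so that $n^\text{iter}$ decomposes into a sum of per-phase contributions rather than something larger. A secondary subtlety is that $q$ must be treated as a constant for the polynomial-in-$K$ bound on $D(K, q)$ from Lemma~\ref{lem:complexity} to apply directly; otherwise the dependence on $q$ would need to be made explicit.
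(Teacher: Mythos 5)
Your argument is essentially the paper's own: you bound the number of successful merges by $K-1$, bound the sweep between two consecutive merges by $\sum_{r=2}^{\min\{q,\abs{\coCS_k}\}}\binom{\abs{\coCS_k}}{r}={D}(\abs{\coCS_k},q)\le {D}(K,q)$, note that the for-loop counter resets cleanly after each merge so the phases do not overlap, and sum over phases. The paper's worst-case bound $W(K,q)=\sum_{i=0}^{K-2}{D}(K-i,q)$ in \eqref{eq:complexity} is exactly the sum of your per-phase budgets, so the decomposition is the same one the paper uses.

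The gap is in the last step, and it is a genuine one (which, to be fair, the paper's own proof shares when it asserts that the $i=0$ term ``admits the largest growth''). A sum of $\Theta(K)$ phase contributions, each of order $K^q$, is $\mathcal{O}(K^{q+1})$, not $\mathcal{O}(K^q)$: concretely, for $q=2$ the bound you derive is $\sum_{m=2}^{K}\binom{m}{2}=\binom{K+1}{3}=\Theta(K^3)$, whereas the theorem claims $\mathcal{O}(K^2)$. Lemma~\ref{lem:complexity} controls a \emph{single} phase, not the sum over phases, so invoking it at the end does not ``give the stated asymptotic.'' What your argument (and the paper's) actually establishes is $n^{\text{iter}}\in\mathcal{O}(K^{q+1})$; the qualitative conclusion of polynomial-time termination for fixed $q$ survives, but to obtain the exponent $q$ one would need an additional idea --- for instance, showing that the full ${D}(\abs{\coCS_k},q)$ sweep cannot precede every merge --- which neither proof supplies.
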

\begin{IEEEproof}
In the worst case, only two coalitions merge at a time, i.e. with $r=2$, and the two coalitions which merge are according to the \emph{last element} in the r-combination list generated in Step 4 in Algorithm \ref{alg:coalition}. In addition, in worst case this described behaviour occurs for every merging occasion until the grand coalition forms. Accordingly, the worst case number of iterations of Algorithm \ref{alg:coalition} is given as:
\begin{equation}\label{eq:complexity}
n^\text{iter} \leq W(K,q) := \sum\nolimits_{i=0}^{K-2} \underbrace{\sum\nolimits_{j=2}^{\min\{q,K-i\}} \binom{K-i}{j}}_{={D}(K-i,q) \text{ in \eqref{eq:comp_merging}}},
\end{equation}
\noindent where the first summation in \eqref{eq:complexity} accounts for the maximum of $K-1$ merging of \emph{two} coalitions before the grand coalition forms and the second summation corresponds to the worst case number of iterations within the \texttt{while} loop in Step 3 of Algorithm \ref{alg:coalition} before exactly two coalitions merge. The binomial coefficient $\binom{K-i}{j}$ is the worst case number of iterations within the \texttt{for} loop before the \text{if} condition in Step 11 is true according to which coalition merging occurs. From Lemma \ref{lem:complexity} and observing that $i = 0$ in \eqref{eq:complexity} admits the largest growth for the upper bound, we obtain the complexity result.
\end{IEEEproof}

Accordingly, Algorithm \ref{alg:coalition} has polynomial time complexity for fixed $q$. The worst case number of iterations $W(K,q)$ in \eqref{eq:complexity} is illustrated in \figurename~\ref{fig:complexity}.%
\section{Numerical Results}\label{sec:simResults}%
\begin{figure}[t]
  \centering
  \includegraphics[width=8.8cm,clip]{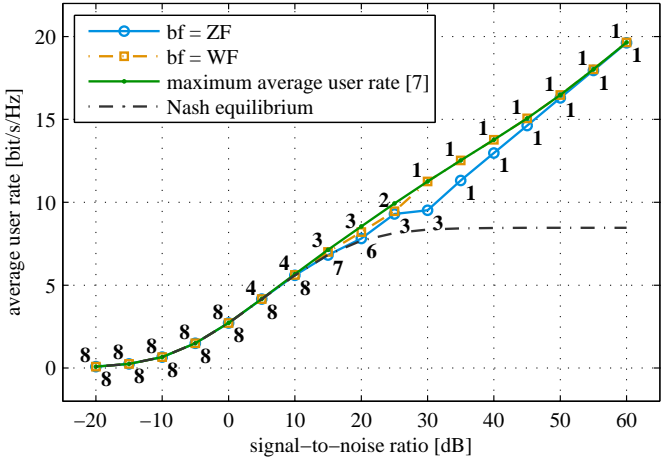}
  \caption{Average rate of the 8 users in the setting in \figurename~\ref{fig:topo} using Algorithm~\ref{alg:coalition}. The number under (above) the curve is the number of coalitions with ZF (WF).}\label{fig:rate_all}
\end{figure}
\begin{figure}[t]
  \centering
  \includegraphics[width=8.8cm,clip]{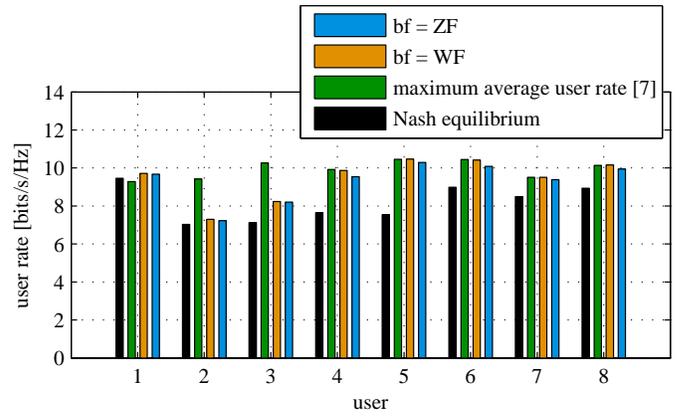}
  \caption{User rates at 25 dB SNR.}\label{fig:rate_user}
\end{figure}

We first consider the setting with $8$ links in \figurename~\ref{fig:topo}. Using the coalition formation algorithm in Algorithm~\ref{alg:coalition}, the average user rate is plotted for increasing SNR in \figurename~\ref{fig:rate_all} where we set $q = 8$ and $\mat{\epsilon} = \mat{0}$ as input to the algorithm. In the low SNR regime, single-player coalitions exist supporting Proposition \ref{thm:singleCoal}. Note that in the low SNR regime, the outcome with joint MRT is efficient \cite{Larsson2008a}. In the mid SNR regime, coalition formation improves the joint performance of the links from the Nash equilibrium. Optimal average user rate is obtained using the monotonic optimization algorithm from the supplementary material of \cite{Bjornson2013}. There, the beamforming space is not restricted to ZF or WF scheme. The average user rate in Nash equilibrium saturates in the high SNR regime. This is contrary to ZF and WF coalition formation where the average user rate increases linearly due to the formation of the grand coalition. The formation of the grand coalition in the high SNR regime supports Proposition \ref{thm:grandCoal}.

Comparing the WF and ZF schemes, it is observed in \figurename~\ref{fig:rate_all} that with WF precoding, larger coalitions form at lower SNR values than with ZF beamforming. As a result, higher average user rate gains are achieved with WF coalition formation than with ZF coalition formation. For example, at SNR = $25$ dB, the coalition structure with WF precoding is $\{\{1,3,4,5,6,7,8\},\{2\}\}$ while the coalition structure with ZF beamforming is $\{\{1,3\},\{2\},\{4,5,6,7,8\}\}$. Observe that the coalition $\{4,5,6,7,8\}$ is the cluster of links in the bottom right side of \figurename~\ref{fig:topo}. This set of links form a single coalition to reduce the interference between one another. The achieved user rates at $25$ dB SNR are shown in \figurename~\ref{fig:rate_user}. For user one, it is observed that optimal beamforming reduces his utility compared to the Nash equilibrium. Hence, the maximum sum rate beamforming strategy is not acceptable for player one and consequently the maximum sum rate strategy is not stable for joint cooperation. For both WF and ZF schemes, player two is in a single-player coalition. However, his rate in both schemes is higher than in Nash equilibrium. This reveals that in this case, the formed coalitions outside $\{2\}$ have positive effects on player two.

\begin{figure}[t]
  \centering
  \includegraphics[width=8.8cm,clip]{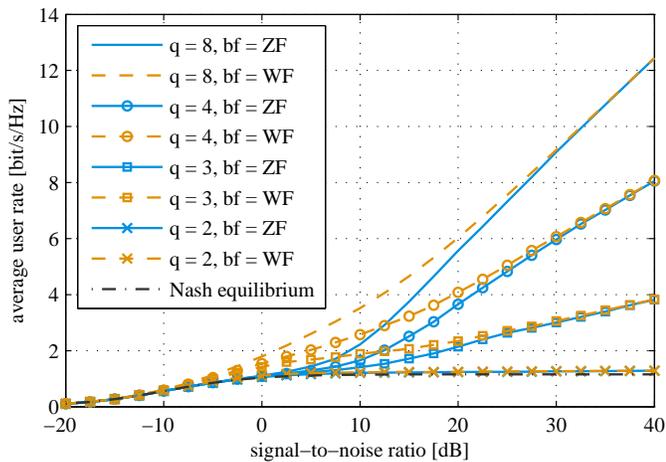}
  \caption{Comparison of average user rates achieved with Algorithm~\ref{alg:coalition} for different values of $q$. The overhead is set to $\mat{\epsilon} = \mat{0}$.}\label{fig:complexity_q_rate}
\end{figure}
\begin{figure}[t]
  \centering
  \includegraphics[width=8.8cm,clip]{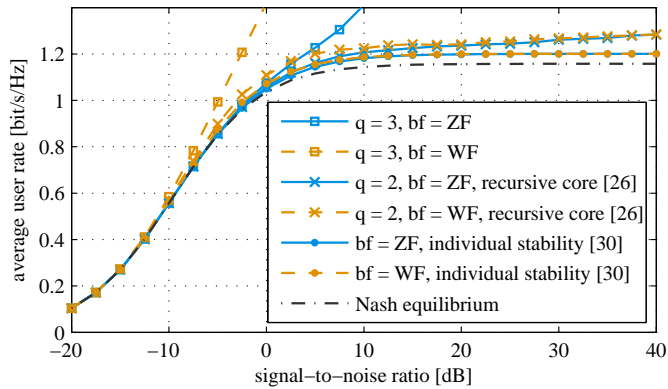}
  \caption{Comparison of average user rates achieved with Algorithm~\ref{alg:coalition} and coalition formation algorithms from \cite{Guruacharya2013} and \cite{Zhou2013}. The overhead is $\mat{\epsilon} = \mat{0}$.}\label{fig:complexity_q_rate2}
\end{figure}
In the following figures, we consider $8$ users with $8$ antennas each and we average the performance over $10^3$ random channel realizations. First, in \figurename~\ref{fig:complexity_q_rate} - \figurename~\ref{fig:complexity_q_iter} we compare the performance of WF and ZF coalition formation for different values of $q$. We also relate and compare our results to the coalition formation algorithms in \cite{Guruacharya2013} and \cite{Zhou2013}. Then, in \figurename~\ref{fig:complexity_o_rate} - \figurename~\ref{fig:complexity_o_iter} we plot the performance of coalition formation for two overhead models defined in \eqref{eq:alpha} and \eqref{eq:alpha2}.

In \figurename~\ref{fig:complexity_q_rate}, the average user rate is plotted for different values of $q$. The parameters $q$ influences the deviation rule as defined in Definition \ref{def:coalition_deviation} and specifies the largest number of coalitions which are allowed to merge in one iteration of Algorithm~\ref{alg:coalition}. It is shown that as $q$ increases, higher performance is obtained. Interestingly, although the number of iterations of Algorithm~\ref{alg:coalition} is not restricted, performance loss still occurs for smaller $q$.

\begin{figure}[t]
  \centering
  \includegraphics[width=8.8cm,clip]{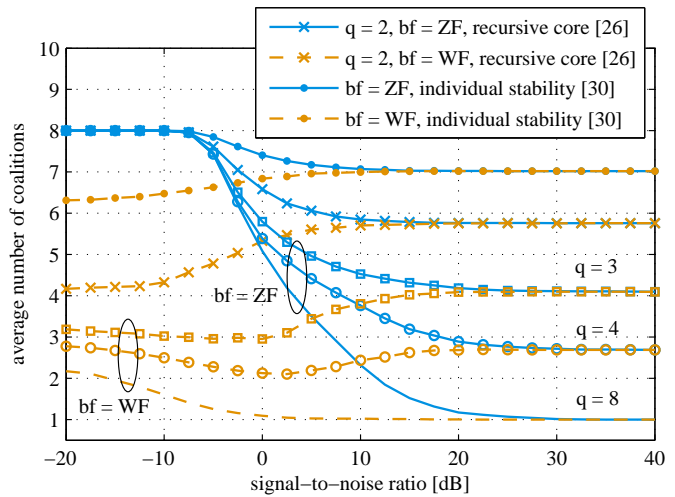}
  \caption{Comparison of the average number of coalitions obtained by Algorithm~\ref{alg:coalition} for different values of $q$.}\label{fig:complexity_q_size}
\end{figure}
\begin{figure}[t]
  \centering
  \includegraphics[width=8.8cm,clip]{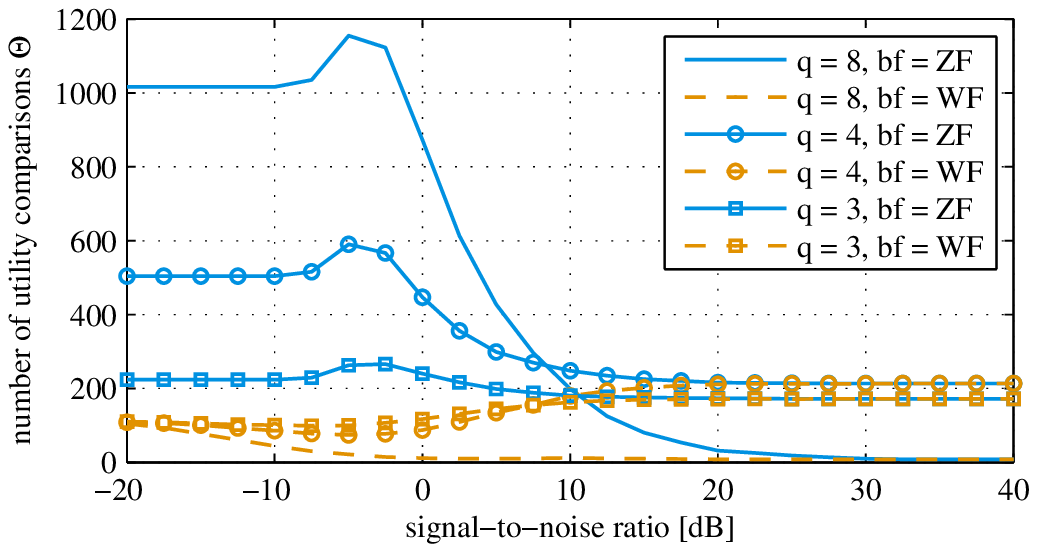}
  \caption{Comparison of the average number of utility comparisons $\Theta$ required by Algorithm~\ref{alg:coalition} for different values of $q$.}\label{fig:complexity_q_iter}
\end{figure}
\begin{figure}[t]
  \centering
  \includegraphics[width=8.8cm,clip]{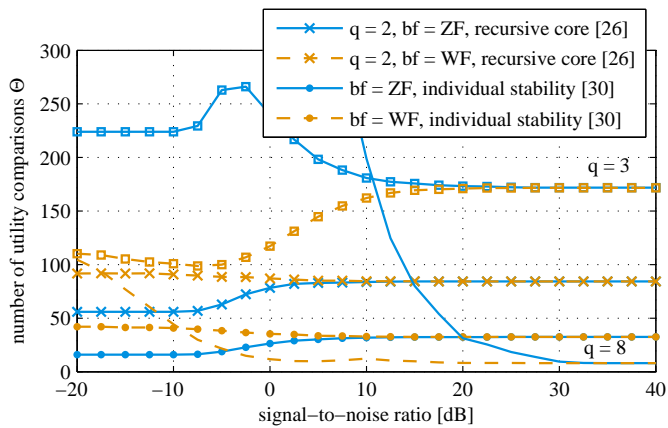}
  \caption{Comparison of the average number of utility comparisons $\Theta$ required by Algorithm~\ref{alg:coalition} for different values of $q$.}\label{fig:complexity_q_iter2}
\end{figure}

In \figurename~\ref{fig:complexity_q_rate2}, we compare our algorithm with the algorithms in \cite{Guruacharya2013} and \cite[Algorithm 1]{Zhou2013}. Note that the system models and cooperation models in \cite{Guruacharya2013} and \cite{Zhou2013} are different than ours. In \cite{Guruacharya2013} cooperation between a set of base stations is based on network MIMO schemes which require exchange of user data between the transmitter. In \cite{Zhou2013}, a MIMO interference channel is considered and cooperation in a coalition is according to ZF transmission. In the proposed coalition formation algorithm in \cite{Guruacharya2013}, the deviation rule is based on merging of two coalitions, i.e., corresponds to $2$-Deviation in Definition \ref{def:coalition_deviation}, and the comparison relation is according to $2$-Dominance in Definition \ref{def:dominance}. Hence, the algorithm in \cite{Guruacharya2013} corresponds to our algorithm with $q=2$, and the authors prove that the resulting coalition structure lies in the recursive core of their coalition formation game. In \figurename~\ref{fig:complexity_q_rate2}, the average performance of the links for $q = 2$ is less than for $q = 3$ and is higher than for \cite[Algorithm 1]{Zhou2013} which is based on individual stability. In \figurename~\ref{fig:complexity_q_rate2}, it is shown that the average performance of individual stability is slightly higher than in Nash equilibrium. Note however that the performance of \cite[Algorithm 1]{Zhou2013} can be significantly improved for sufficiently large number of antennas at the transmitters.

\figurename~\ref{fig:complexity_q_rate} and \figurename~\ref{fig:complexity_q_rate2} lead us to the conclusion that in multi-antenna interference channels, performance improvement through cooperation with ZF or WF beamforming depends greatly on the number of users that are allowed to deviate and cooperate at a time.

In \figurename~\ref{fig:complexity_q_size}, the average number of coalitions obtained from Algorithm~\ref{alg:coalition} are plotted and compared to \cite[Algorithm 1]{Zhou2013}. The average number of coalitions with the ZF beamforming scheme is larger than with the WF beamforming scheme, i.e., with WF more coalitions merge than with ZF. At low SNR, single player coalitions exist with ZF beamforming. This illustrates the result in Proposition~\ref{thm:singleCoal}. Since, WF beamforming converges to MRT beamforming at low SNR, we observe that in this SNR regime some coalitions form with WF beamforming. As $q$ increases, the average number of coalitions decreases. This explains the performance loss in \figurename~\ref{fig:complexity_q_rate}. Both, WF and ZF coalition formation obtain the same number of coalitions for the same $q$ at high SNR. This is because WF beamforming converges to ZF beamforming in this SNR regime. Coalition formation with \cite[Algorithm 1]{Zhou2013} leads to high average number of coalitions which means that the number of users that cooperate is small.

In \figurename~\ref{fig:complexity_q_iter}, the average number of utility comparisons during Algorithm~\ref{alg:coalition} is plotted for different $q$. Note, that Algorithm~\ref{alg:coalition} starts in single-player coalitions and the users initially search for the largest number of coalitions to merge. The number of utility comparisons $\Theta$ during the search is specified in Step 8 in Algorithm~\ref{alg:coalition} and corresponds to the total number of utility comparisons the users need to do before the algorithm converges. It is shown in \figurename~\ref{fig:complexity_q_iter} that the WF scheme requires generally much lower number of utility comparisons than the ZF scheme. The reason for this is with the WF scheme more coalitions merge than with the ZF scheme according to \figurename~\ref{fig:complexity_q_size} which leads to faster convergence rate of Algorithm~\ref{alg:coalition} with WF beamforming than with the ZF scheme. The number of utility comparisons $\Theta$ generally decreases for smaller $q$ since the number of possible deviations decreases. For a specific $q$, it is observed that $\Theta$ is not monotonic with SNR. With ZF beamforming, the number of utility comparisons increases at around $-5$ dB SNR and afterwards decreases. The reason for this behavior is at very low SNR no coalitions merge and hence Algorithm~\ref{alg:coalition} terminates after searching over all merging possibilities of single-player coalitions. At around $-5$ dB SNR, a small number of coalitions merge and hence Algorithm~\ref{alg:coalition} requires additional iterations to search for possible merging in the new coalition structure. Since the number of coalitions that merge is small, the number of comparisons is large. At larger SNR, the number of coalitions that merge increases and hence the coalitions are found faster than at smaller SNR. At high SNR, the grand coalition is then favorable for all the users and for $q=8$ it is possible to build the grand coalition in a single iteration of Algorithm~\ref{alg:coalition}. This is contrary to the case $q<8$ which explains the performance advantage of $q = 8$. In \figurename~\ref{fig:complexity_q_iter2}, we can see that \cite[Algorithm 1]{Zhou2013} requires a low number of utility comparisons and hence has low complexity.%

\begin{figure}[t]
  \centering
  \includegraphics[width=8.8cm,clip]{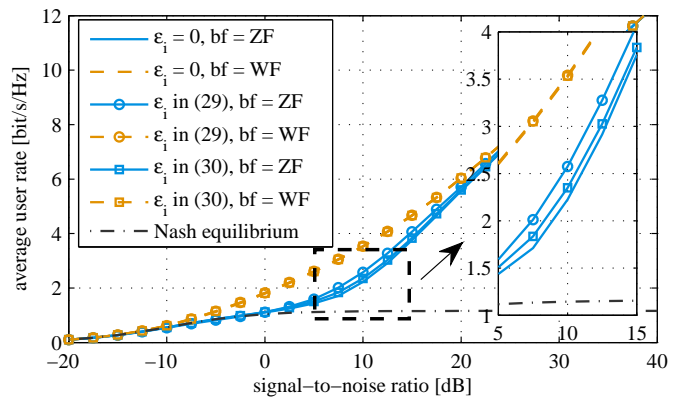}
  \caption{Comparison of average user rates of Algorithm~\ref{alg:coalition} with $q = 8$ for the overhead models specified in \eqref{eq:alpha} and \eqref{eq:alpha2}.}\label{fig:complexity_o_rate}
\end{figure}
\begin{figure}[t]
  \centering
  \includegraphics[width=8.8cm,clip]{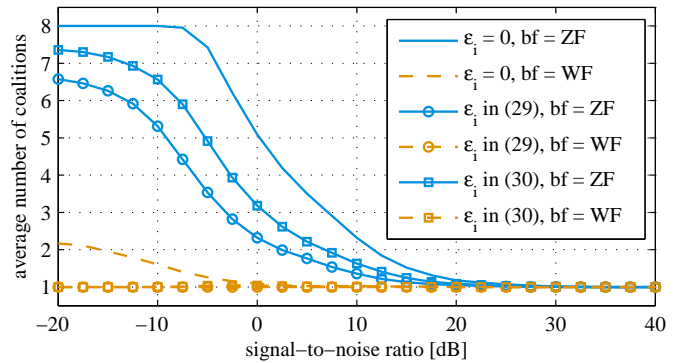}
  \caption{Comparison of average number of coalitions resulting from Algorithm~\ref{alg:coalition} with $q = 8$ for the overhead models specified in \eqref{eq:alpha} and \eqref{eq:alpha2}.}\label{fig:complexity_o_size}
\end{figure}
\begin{figure}[t]
  \centering
  \includegraphics[width=8.8cm,clip]{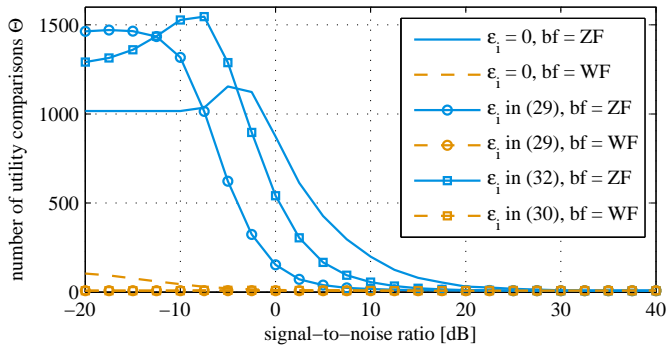}
  \caption{Comparison of average number of utility comparisons $\Theta$ in Algorithm~\ref{alg:coalition} with $q = 8$ for the overhead models specified in \eqref{eq:alpha} and \eqref{eq:alpha2}.}\label{fig:complexity_o_iter}
\end{figure}
%

For the subsequent plots, we define two overhead models to specify $\mat{\epsilon}$ for a given coalition structure $\coCS$ as follows:
\begin{align}\label{eq:alpha}
\epsilon_i & = \frac{\abs{\coS}}{\abs{\coN}} u_i(\bw_1^\text{bf},\ldots,\bw_K^\text{bf}), \quad i \in \coS, \coS \in \coCS,\\ \label{eq:alpha2}
\epsilon_i & = \frac{1}{\abs{\coN}} u_i(\bw_1^\text{bf},\ldots,\bw_K^\text{bf}), \quad \text{for all } i \in \coN,
\end{align}
\noindent with $\text{bf} = \{\text{\small{ZF},\small{WF}}\}$. According to \eqref{eq:alpha} and following the comparison relation in \eqref{eq:preference}, the overhead for a user which examines whether to join a coalition $\coS$ increases with the size of $\coS$. In \eqref{eq:alpha2}, the overhead is assumed to be fixed. In both overhead models in \eqref{eq:alpha} and \eqref{eq:alpha2}, we assume that the overhead is proportional to the utility in the grand coalition since it corresponds to the largest overhead needed to examine whether the grand coalition forms.

In \figurename~\ref{fig:complexity_o_rate} - \figurename~\ref{fig:complexity_o_iter}, we set $q = 8$ and plot the performance of Algorithm~\ref{alg:coalition}. In \figurename~\ref{fig:complexity_o_rate}, the average user rate in both ZF and WF schemes is similar with both overhead models in \eqref{eq:alpha} and \eqref{eq:alpha2} and no overhead ($\epsilon_i = 0$). With ZF beamforming, it is shown that higher average user rate improvement is achieved for larger overhead. The explanation of this effect is that with larger overhead more coalitions merge as is shown in \figurename~\ref{fig:complexity_o_size}. With WF coalition formation, it is shown that with the proposed overhead models, the grand coalition always forms in the simulation setup.

In \figurename~\ref{fig:complexity_o_iter}, the average number of utility comparisons $\Theta$ required during Algorithm~\ref{alg:coalition} is plotted. With WF beamforming, $\Theta$ is very low which reveals that the WF beamforming scheme requires less complexity in coalition formation than ZF beamforming. Generally, with ZF beamforming and higher overhead, the number of utility comparisons decrease. This is observed after approximately $-2.5$ dB SNR. At low SNR, the number of utility comparisons are larger including the overhead compared to no overhead because a small number of users merge to form coalitions as is shown in \figurename~\ref{fig:complexity_o_size}. Consequently, a larger number of searches during Algorithm~\ref{alg:coalition} is required. In comparison to no overhead, it can be observed that the maximum number of utility comparisons shifts to the left on the SNR axis as the overhead increases.%
\section{Conclusions}\label{sec:conc}
We study cooperation in the MISO IFC using coalitional games. A transmitter's noncooperative transmission strategy is MRT. We consider ZF or WF precoding for cooperative user transmission in a coalition. The necessary and sufficient conditions under which all users profit from joint cooperation with ZF are characterized. It is shown that incorporating an overhead in deviation of a coalition enlarges the set of conditions under which the users have the incentive to cooperate compared to the case without considering the overhead. Afterwards, we consider coalitional formation games and propose a distributed coalition formation algorithm based on coalition merging. The complexity of the algorithm is tuneable and its implementation requires two-bit signalling between the transmitters. We prove that the algorithm terminates in polynomial time and generates a coalition structure in the coalition structure stable set. The performance improvement of the links through cooperation with simple distributed transmission schemes is achieved and depends on the complexity for coalition deviation and user-specific overhead measures.

As future work, we will extend the current work to consider multiple antennas at both the transmitters and receivers. Moreover, we will generalize to multi cell settings in which multiple receivers are associated with each transmitter. Significant to study for these settings is the appropriate design for the cooperative precoding schemes used within the coalition. Nevertheless, our interest is also to investigate different solution concepts for coalitional games in partition form and also to draw comparisons between them.

\appendices
\section{Proof of Proposition \ref{thm:grandCoal}}\label{proof:grandCoal}
Considering an arbitrary player $i$ in an arbitrary coalition $\coS$, we write the condition in \eqref{eq:GrandCoalZF1} as
\begin{multline}\label{eq:GrandCoalZF}
\log_2\pp{1 + \frac{\sabs{\bh_{ii}^\H \bwzf{i}{\coS}}}{\sigma^2 + \sum_{j \in \coN \backslash \coS} \sabs{\bh_{ji}^\H \bw_j^\mrt}}} - \epsilon_i \\ \leq \log_2\pp{1 + \frac{ \sabs{\bh_{ii}^\H \bwzf{i}{\coN}}}{\sigma^2}},
\end{multline}
\noindent which is equivalent to
\begin{equation}\label{eq:quad0}
1 + \frac{ A_{i,\coS} }{\sigma^2 + B_{i,\coS} } \leq 2^{\epsilon_i} + 2^{\epsilon_i} \frac{ C_{i} }{\sigma^2},
\end{equation}
\noindent where the introduced notations $A_{i,\coS},B_{i,\coS},$ and $C_{i}$ are given in \eqref{eq:A} and \eqref{eq:B} in Proposition \ref{thm:grandCoal}. Cross multiplying \eqref{eq:quad0} and solving for $\sigma^2$ we get the following condition
\begin{equation}\label{eq:quad}
f(\sigma^2) \geq 0, \quad \sigma^2 > 0,
\end{equation}
\noindent where $f(\sigma^2) := (2^{\epsilon_i} - 1)(\sigma^2)^2 + (2^{\epsilon_i} (B_{i,\coS} + C_{i}) - (B_{i,\coS} + A_{i,\coS})) \sigma^2 + 2^{\epsilon_i} C_{i} B_{i,\coS}$. In order to analyze \eqref{eq:quad}, a case study is summarized in Table \ref{tbl:case_study} and illustrated in \figurename~\ref{fig:illust}. 

If $\epsilon_i = 0$, then $f(\sigma^2)$ in \eqref{eq:quad} is a straight line as in Case I in \figurename~\ref{fig:illust}. The condition in \eqref{eq:quad} reduces to
\begin{align}
(C_{i} - A_{i,\coS}) \sigma^2 + C_{i} B_{i,\coS} & \geq 0,\\
\sigma_0^2 :=\frac{C_{i} B_{i,\coS}}{A_{i,\coS} - C_{i} } & \geq \sigma^2 > 0,
\end{align}
and corresponds to the shaded region in \figurename~\ref{fig:illust}.

If $\epsilon_i > 0$, and having $(2^{\epsilon_i} - 1) > 0$ then the quadratic polynomial $f(\sigma^2)$ in \eqref{eq:quad} describes a parabola with a minimum and opens upwards as illustrated in Cases II-IV in \figurename~\ref{fig:illust}. If the minimum of the parabola is strictly larger than zero (Case II in \figurename~\ref{fig:illust}), then the quadratic equation has no real roots and the discriminant of $f(\sigma^2)$ is negative, i.e.,
\begin{multline}\label{eq:delta}
\Delta_{i,\coS} := (2^{\epsilon_i} (B_{i,\coS} + C_{i}) - (B_{i,\coS} + A_{i,\coS}))^2 \\ - 4 (2^{\epsilon_i} - 1)2^{\epsilon_i} C_{i} B_{i,\coS} < 0.
\end{multline}
If \eqref{eq:delta} is satisfied, so is the condition in \eqref{eq:quad} for any $0 < \sigma^2 < \infty$ because the entire parabola has strictly positive values.

\begin{figure}
  \centering
\scalebox{0.9} 
{
\begin{pspicture}(0.7,-4)(9,4.62)
\definecolor{color1643b}{rgb}{0.8,0.8,0.8}
\psframe[linewidth=0.04,linecolor=color1643b,dimen=outer,fillstyle=solid,fillcolor=color1643b](9.083125,-0.7)(8.163125,-3.4)
\psframe[linewidth=0.04,linecolor=color1643b,dimen=outer,fillstyle=solid,fillcolor=color1643b](3.183125,3.8)(1.283125,1.1)
\psline[linewidth=0.04cm,arrowsize=0.04cm 2.0,arrowlength=2.0,arrowinset=0.0]{<-}(1.283125,4.2)(1.283125,1.0)
\psline[linewidth=0.04cm,arrowsize=0.04cm 2.0,arrowlength=2.0,arrowinset=0.0]{->}(0.683125,1.2)(4.683125,1.2)
\usefont{T1}{ptm}{m}{n}
\rput(1.1067188,0.995){\footnotesize $0$}
\usefont{T1}{ptm}{m}{n}
\rput(4.266719,0.895){\footnotesize $\sigma^2$}
\usefont{T1}{ptm}{m}{n}
\rput(0.80671877,3.795){\footnotesize $f(\sigma^2)$}
\psframe[linewidth=0.04,linecolor=color1643b,dimen=outer,fillstyle=solid,fillcolor=color1643b](9.083125,3.8)(5.983125,1.1)
\psline[linewidth=0.04cm,arrowsize=0.04cm 2.0,arrowlength=2.0,arrowinset=0.0]{<-}(5.983125,4.2)(5.983125,1.0)
\psline[linewidth=0.04cm,arrowsize=0.04cm 2.0,arrowlength=2.0,arrowinset=0.0]{->}(5.383125,1.2)(9.383125,1.2)
\usefont{T1}{ptm}{m}{n}
\rput(5.806719,0.995){\footnotesize $0$}
\psbezier[linewidth=0.04](6.583125,3.9)(6.583125,0.8)(8.483125,0.8)(8.483125,3.9)
\usefont{T1}{ptm}{m}{n}
\rput(8.966719,0.895){\footnotesize $\sigma^2$}
\usefont{T1}{ptm}{m}{n}
\rput(5.5067186,3.795){\footnotesize $f(\sigma^2)$}
\psline[linewidth=0.04cm](1.683125,4.1)(3.483125,0.6)
\usefont{T1}{ptm}{m}{n}
\rput(3.0267189,0.895){\footnotesize $\sigma_0^2$}
\psframe[linewidth=0.04,linecolor=color1643b,dimen=outer,fillstyle=solid,fillcolor=color1643b](4.383125,-0.7)(3.083125,-3.4)
\psline[linewidth=0.04cm,arrowsize=0.04cm 2.0,arrowlength=2.0,arrowinset=0.0]{<-}(3.083125,-0.3)(3.083125,-3.5)
\psline[linewidth=0.04cm,arrowsize=0.04cm 2.0,arrowlength=2.0,arrowinset=0.0]{->}(0.683125,-3.3)(4.683125,-3.3)
\usefont{T1}{ptm}{m}{n}
\rput(2.9067187,-3.505){\footnotesize $0$}
\psbezier[linewidth=0.04](0.883125,-1.4)(0.883125,-4.5)(2.783125,-4.5)(2.783125,-1.4)
\usefont{T1}{ptm}{m}{n}
\rput(4.266719,-3.605){\footnotesize $\sigma^2$}
\usefont{T1}{ptm}{m}{n}
\rput(2.6067188,-0.705){\footnotesize $f(\sigma^2)$}
\psframe[linewidth=0.04,linecolor=color1643b,dimen=outer,fillstyle=solid,fillcolor=color1643b](6.923125,-0.7)(5.983125,-3.4)
\psline[linewidth=0.04cm,arrowsize=0.04cm 2.0,arrowlength=2.0,arrowinset=0.0]{<-}(5.983125,-0.3)(5.983125,-3.5)
\psline[linewidth=0.04cm,arrowsize=0.04cm 2.0,arrowlength=2.0,arrowinset=0.0]{->}(5.383125,-3.3)(9.383125,-3.3)
\usefont{T1}{ptm}{m}{n}
\rput(5.806719,-3.505){\footnotesize $0$}
\psbezier[linewidth=0.04](6.583125,-1.5)(6.583125,-4.6)(8.483125,-4.6)(8.483125,-1.5)
\usefont{T1}{ptm}{m}{n}
\rput(8.966719,-3.605){\footnotesize $\sigma^2$}
\usefont{T1}{ptm}{m}{n}
\rput(5.5067186,-0.705){\footnotesize $f(\sigma^2)$}
\usefont{T1}{ptm}{m}{n}
\rput(6.826719,-3.605){\footnotesize $\sigma_1^2$}
\usefont{T1}{ptm}{m}{n}
\rput(8.226719,-3.605){\footnotesize $\sigma_2^2$}
\usefont{T1}{ptm}{m}{n}
\rput(2.6321876,4.41){Case I}
\usefont{T1}{ptm}{m}{n}
\rput(7.464375,-0.09){Case IV}
\usefont{T1}{ptm}{m}{n}
\rput(7.4921875,4.41){Case II}
\usefont{T1}{ptm}{m}{n}
\rput(2.5521874,-0.09){Case III}
\psline[linewidth=0.04cm](0.483125,0.4)(9.383125,0.4)
\psline[linewidth=0.04cm](4.883125,4.6)(4.883125,-3.9)
\end{pspicture}
}
  \caption{Illustrations for the case study of the quadratic inequality in \eqref{eq:quad}.}\label{fig:illust}
\end{figure}
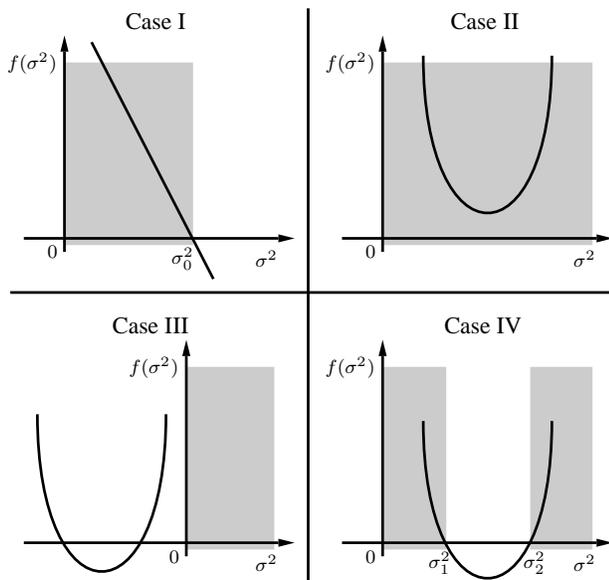
\begin{table}[t]
\centering
\caption{\label{tbl:case_study}Case study for analysing the quadratic equation in \eqref{eq:quad}.}
\begin{tabular}{|l|ccccc|}
   \hline
  Case I     & $\epsilon_i = 0$ & & & & \\ \hline
  Case II    & $\epsilon_i > 0$ & and &$\Delta_{i,\coS} < 0$ & & \\ \hline
  Case III    & $\epsilon_i > 0$ & and &$\Delta_{i,\coS} \geq 0$ & and &$\sigma_1^2 + \sigma_2^2 \leq 0$ \\ \hline
  Case IV   & $\epsilon_i > 0$ & and &$\Delta_{i,\coS} \geq 0$ & and &$\sigma_1^2 + \sigma_2^2 > 0$ \\
  \hline
\end{tabular}
\end{table}

For $\Delta_{i,\coS}\geq 0$, $f(\sigma^2)$ in \eqref{eq:quad} has two real roots which corresponds to Cases III and IV in \figurename~\ref{fig:illust}. Then condition \eqref{eq:quad} holds for
\begin{equation}\label{eq:condition_grancoal_player}
\sigma^2 \leq \sigma^2_1 \text{  or  } \sigma^2\geq \sigma^2_2 \text{  and  } \sigma^2>0,
\end{equation}
\noindent where $\sigma^2_1$ and $\sigma^2_2$ are the roots of $f(\sigma^2)$ in \eqref{eq:quad} given as
\begin{equation}\label{eq:roots}
    \sigma_1^2 = \frac{-\Psi_{i,\coS} - \sqrt{\Delta_{i,\coS}} }{2(2^{\epsilon_i} - 1)}, \quad \sigma_2^2 =\frac{-\Psi_{i,\coS} + \sqrt{\Delta_{i,\coS}} }{2(2^{\epsilon_i} - 1)}.
\end{equation}
\noindent with $\Psi_{i,\coS}:= \pp{2^{\epsilon_i} (B_{i,\coS} + C_{i}) - (B_{i,\coS} + A_{i,\coS})} \geq 0$. The product of both roots is \cite[3.8.1]{Abramowitz1972}: $\sigma^2_1 \sigma^2_2 = \pp{2^{\epsilon_i} C_{i} B_{i,\coS}}/\pp{2^{\epsilon_i} - 1} \geq 0$. Thus, both roots have the same sign. In order to determine whether both roots are negative or positive we study their sum. The sum of the roots is less than or equal to zero if and only if
\begin{equation}
\Psi_{i,\coS}:= \pp{2^{\epsilon_i} (B_{i,\coS} + C_{i}) - (B_{i,\coS} + A_{i,\coS})} \geq 0.
\end{equation}
\noindent Under the above condition, \eqref{eq:condition_grancoal_player} is satisfied for any $0 < \sigma^2 < \infty$ because the largest root is negative.

In Case IV, the sum of the roots in \eqref{eq:roots} is strictly positive. This is satisfied if and only if
\begin{align}
\log_2\pp{B_{i,\coS} + A_{i,\coS}} - \log_2\pp{B_{i,\coS} + C_{i}} > {\epsilon_i}.
\end{align}
\noindent In this case, the condition in \eqref{eq:condition_grancoal_player} is illustrated in the shaded areas in Case IV in \figurename~\ref{fig:illust}.

Note that Cases I-IV are all possibilities to study \eqref{eq:quad} associated with a user $i$ in coalition $\coS$. Combining Cases I-IV, we formulate the lower and upper bounds on $\sigma^2$ for a specific user $i$ in a coalition $\coS$ in \eqref{eq:sigma_lb} and \eqref{eq:sigma_ub}, respectively. Since the derived conditions in \eqref{eq:condition_grancoal_player} must hold for all players and all coalitions, we must take the maximum over all players and all coalitions on the lower bound $\sigma^2_2$ (largest root of $f(\sigma^2)$ in \eqref{eq:quad}) and the minimum over all players and coalitions for the upper bound $\sigma^2_1$ (smallest root of $f(\sigma^2)$).%
\bibliographystyle{IEEEtran}
\bibliography{references}
\end{document}